\documentclass[10pt, two column, twoside]{IEEEtran}
\usepackage{amssymb}
\usepackage{amsmath} 
\usepackage[lined,boxed,commentsnumbered, ruled]{algorithm2e}
\usepackage{mathrsfs}
\usepackage{algorithmic}
\usepackage{bm}
\usepackage{tikz}
\usetikzlibrary{arrows}
\usepackage{subfigure}
\usepackage{graphicx,booktabs,multirow}
\usepackage{epstopdf}
\usepackage{subfigure}
\usepackage{multirow}
\usepackage{tabularx} 
\usepackage{booktabs}

\definecolor{colorhkust}{RGB}{20,43,140}
\definecolor{colortsinghua}{RGB}{116,52,129}
\definecolor{color1}{RGB}{128,0,0}

\usepackage{amsthm}

%\newtheoremstyle{mystyle}
%{3pt}{3pt}{\it}{}{\bfseries}{:}{3pt}
%{}
%\theoremstyle{mystyle}     
\newtheorem{lemma}{Lemma}

\newtheorem{definition}{Definition}

%% Macros by BM

\newcommand{\diagg}{\mathrm{diag}}

\begin{document}

        \title{Reconfigurable Intelligent Surface Assisted Massive MIMO with Antenna Selection}
\author{Jinglian He, \textit{Student Member}, \textit{IEEE}, Kaiqiang Yu, \textit{Student Member}, \textit{IEEE}, Yuanming Shi, \textit{Member}, \textit{IEEE},$\ \ \ $ Yong Zhou, \textit{Member}, \textit{IEEE},
        Wei Chen, \textit{Senior Member}, \textit{IEEE}, and Khaled B. Letaief, \textit{Fellow}, \textit{IEEE}
        \thanks{J. He is with the School of Information Science and Technology, ShanghaiTech University, Shanghai 201210, China, Shanghai Institute of Microsystem and Information Technology, Chinese Academy of Sciences, Shanghai 200050, China, and also with the University of Chinese Academy of Sciences, Beijing 100049, (e-mail: hejl1@shanghaitech.edu.cn).}
        \thanks{K. Yu, Y. Zhou and Y. Shi are with the School of Information Science and Technology, ShanghaiTech University, Shanghai 201210, China (e-mail: yukaiqiangsdu@gmail.com, zhouyong@shanghaitech.edu.cn, shiym@shanghaitech.edu.cn).}
        \thanks{W. Chen is with the Department of Electronic Engineering, Tsinghua
                University, Beijing 100084, China (e-mail: wchen@tsinghua.edu.cn).}
        \thanks{K. B. Letaief is with the Department of Electronic and Computer Engineering, Hong Kong University of Science and Technology, Hong Kong (e-mail: eekhaled@ust.hk). He is also with Peng Cheng Laboratory, Shenzhen, China.}
        \thanks{This paper has been presented in part at the IEEE International Conference on Communication (ICC), Shanghai, China, May. 2019 \cite{2019YuStochastic}.}
}
        
        \maketitle
\begin{abstract}
Antenna selection is capable of reducing the hardware complexity of massive multiple-input multiple-output (MIMO) networks at the cost of certain performance degradation. 
Reconfigurable intelligent surface (RIS) has emerged as a cost-effective technique that can enhance the spectrum-efficiency of wireless networks by reconfiguring the propagation environment. 
By employing RIS to compensate the performance loss due to antenna selection, in this paper we propose a new network architecture, i.e., RIS-assisted massive MIMO system with antenna selection, to enhance the system performance while enjoying a low hardware cost. This is achieved by maximizing the channel capacity via joint antenna selection and passive beamforming while taking into account the cardinality constraint of active antennas and the unit-modulus constraints of all RIS elements. However, the formulated problem turns out to be highly intractable due to the non-convex constraints and coupled optimization variables, for which an alternating optimization framework is provided, yielding antenna selection and passive beamforming subproblems. The computationally efficient submodular optimization algorithms are developed to solve the antenna selection subproblem under different channel state information assumptions. The iterative algorithms based on block coordinate descent are further proposed for the passive beamforming design by exploiting the unique problem structures. Experimental results will demonstrate the algorithmic advantages and desirable performance of the proposed algorithms for RIS-assisted massive MIMO systems with antenna selection.
\end{abstract}

\begin{IEEEkeywords}
Reconfigurable intelligent surface, massive MIMO, antenna selection, passive beamforming, stochastic submodular maximization.
\end{IEEEkeywords}
\section{Introduction}
\label{sec:intro}
To meet the rapidly growing traffic demand for integrated intelligent services \cite{letaief2019roadmap}, massive multiple-input multiple-output (MIMO) is recognized as a key enabling technology for future wireless communication systems \cite{larsson2014massive}. 
Equipped with a very large number of antennas at a base station (BS), massive MIMO holds the potential for dramatically increasing the spatial degrees of freedom, thereby significantly enhancing the spectral-efficiency and energy-efficiency \cite{rusek2012scaling}, as well as supporting massive connectivity \cite{liu2018massive}. 
However, each antenna needs to be supported by a dedicated radio frequency (RF) chain, which results in the high hardware cost and energy consumption. This becomes one of the key limitations for the practical implementation of massive MIMO systems and can be alleviated by a promising approach known as antenna selection \cite{mendoncca2019antenna, 2018aSimple, chen2019intelligent}. 
Specifically, a subset of antennas are selected to be connected to a small number of RF chains via an RF switching network, thereby reducing the cost and power consumption of RF chains. 

To achieve a favorable balance between system performance and hardware complexity, the authors in \cite{mendoncca2019antenna} proposed a greedy algorithm based on the matching pursuit technique to perform antenna selection, with an objective to minimize the mean square error of signal reception, while reducing the transmit power. A simple greedy algorithm based on the
submodularity and monotonicity was proposed in \cite{2018aSimple} to maximize the downlink sum-rate capacity under antenna selection constraints. The signal-to-noise
ratio (SNR) and energy efficiency maximization algorithms  were developed in \cite{gkizeli2014maximum} and \cite{li2014energy}, respectively, under the antenna selection framework. 
Although the best antennas for enhancing the spectral or energy efficiency can be found, antenna selection inevitably introduces performance loss as only a subset of antennas are active \cite{gao2017massive}. It is thus desirable to design a new network architecture that alleviates the performance loss while enjoying low hardware cost and power consumption with BS antenna selection. 

To achieve this goal, we propose to adopt the recently proposed reconfigurable intelligent surface (RIS), which provides a cost-effective way to improve the system performance by dynamically programming the wireless propagation environment \cite{wu2019towards, 2020WangZ}. Specifically, RIS is a planar meta surface consisting of many low-cost passive reflecting elements (e.g., phase shifter or printed dipoles) connected to a smart software controller \cite{Intelligent2019reflecting}. Due to the thin films form, RIS can be easily deployed onto the walls of high-rise buildings with a low deployment cost \cite{yu2020robust}. By leveraging the recent advancement of  meta materials \cite{cui2014coding}, each passive reflecting element of RIS is able to  independently
adjust its reflection coefficient for the incident signals via adjusting its reflection coefficient (i.e., passive beamforming coefficient). This  can be exploited to enhance the signal power and mitigate the performance loss due to antenna selection via effective passive beamforming design \cite{Intelligent2019reflecting}.

RIS-assisted massive MIMO systems with antenna selection can thus provide a principled way to improve system performance while reducing the hardware cost (i.e., antenna selection and low-cost RIS). This is achieved by the joint design of passive beamforming at the RIS and antenna selection at the BS, which yields the following unique challenges. Specifically, we consider the
channel capacity maximization problem, while taking into account the cardinality
constraint of the total number of active antennas and the unit-modular constraints
of all reflecting elements at the RIS. This yields a mixed combinatorial optimization problem with coupled optimization variables. 
In addition, it is generally difficult to  obtain the instantaneous and perfect channel state information (CSI) in RIS-assisted massive MIMO systems, because of the cascaded propagation channel and a large number of BS antennas \cite{yuan2020reconfigurable}. 
We thus also study the ergodic capacity maximization problem for RIS-assisted massive MIMO systems, where only historically collected channel samples are available.     

\subsection{Contributions} 
In this paper, we propose a novel network architecture, i.e., RIS-assisted massive MIMO systems with antenna selection, to enhance the spectrum efficiency while reducing the hardware cost. 
In particular, we jointly optimize the antenna selection at the BS and the phase shifts at the RIS to maximize the instantaneous/ergodic sum capacity under different CSI assumptions. 
The main contributions of this paper are summarized as follows: 

\begin{itemize}
        \item 
        We formulate a channel capacity maximization problem for RIS-assisted massive MIMO systems via joint antenna selection and passive beamforming. We propose an alternating optimization framework to decouple the optimization problem into two subproblems, i.e., the subproblem of antenna selection at BS  and the subproblem of passive beamforming at RIS. 
        \item 
        With perfect instantaneous CSI, we develop a greedy algorithm with $(1-1/e)$ approximation ratio for antenna selection by leveraging the submodular optimization technique. An iterative low-complexity algorithm with an \textit{optimal} solution in the \textit{closed-form} for passive beamforming is also provided by exploiting its unique  problem structures.
        \item 
        The ergodic sum capacity maximization problem is considered without a prior knowledge of the underlying channel distribution. We propose to solve the problem based only on  the historically collected channel samples, supported by the alternating optimization procedure, yielding the stochastic antenna selection and passive beamforming subproblems.  
        
        \item We rewrite the stochastic antenna selection subproblem as a stochastic submodular maximization problem via exploiting the submodularity and monotonicity of the objective function, followed by developing an effective stochastic gradient method with a fast gradient estimate algorithm.
        A novel iterative algorithm based on a block coordinate descent is further developed to solve the   nonconvex stochastic passive
        beamforming subproblem.    
\end{itemize}

Extensive simulation results are provided to demonstrate the excellent performance of our proposed advanced alternating optimization algorithms compared with the system without RIS and the system with only antenna selection or passive beamforming. Under the consideration of both perfect CSI and historical channel realizations, the algorithmic advantages and the desirable performance of channel capacity maximization in RIS-assisted massive MIMO systems are presented.

\subsection{Organization}
The remainder of this paper is organized as follows. We present the system architecture and problem formulation in Section \ref{sys}. The system designs with perfect CSI and channel  realizations are considered in Section \ref{section3} and Section \ref{section4}, respectively. Simulation results are illustrated in Section \ref{simulation}. Finally, we conclude this paper in Section \ref{conclusion}.

\emph{Notations}: We use boldface lowercase (e.g., $\bm{h}$) and uppercase letters (e.g., $\bm{H}$) to represent vectors and matrices, respectively. $|\cdot|$, $(\cdot)^\natural$, $\arg\{\cdot\}$ denote the absolute value, conjugate, and angle of a complex number, respectively. For a set $\mathcal{S}$, the symbols $|\mathcal{S}|$ denotes the basis of set $\mathcal{S}$. And the symbols $(\cdot)^{\sf H}$ denotes the conjugate transpose. $\mathbb{C}^{x\times y}$ denotes the space of $x\times y$ complex-value matrices, while $\bm{I}_{N\times N}$ denotes the $N\times N$ identity matrix. $\mathbb{E}[\cdot]$ denotes the statistical expectation. The diagonal matrix is denoted as $\text{diag}(\cdot)$. For a differentiable function $F(\cdot)$, we use $\nabla F$ to denote its gradient. We use $e$ to denote Euler's number. We summarize the main notations in this paper as shown in Table \ref{tabel1}.

\begin{table*}[t]
        \centering
        \caption{SYMBOL NOTATIONS}\label{tabel1} 
        \vspace{0mm}
        \resizebox{\textwidth}{!}{
                \linespread{1.4}\selectfont
                \begin{tabular}{|c|l|c|l|}
                        \hline  
                        \textbf{Symbol} & \textbf{Description} & \textbf{Symbol} & \textbf{Description} \\
                        \hline
                        $L$ &  Total number of BS antennas & $\hat{\bm{H}}(\mathcal{S})$ & Direct channel matrix from selected BS antennas to users \\
                        \hline  
                        $\mathcal{L}$ & Index set of BS antennas & $\bm{T}(\mathcal{S})$ & Channel matrix from selected BS antennas to RIS \\ \hline  
                        $N_S$ & Number of selected active BS antennas & $\bm{R}$ & Channel matrix from RIS to users\\  
                        \hline  
                        $\mathcal{S}$ & Index set of selected active antennas & $\bm{\Theta}$ & RIS reflection matrix\\  
                        \hline 
                        $N$ & Number of RIS reflecting elements & $C_{\bm{H}}(\mathcal{S},\bm{\Theta})$ & Downlink sum capacity\\  
                        \hline
        \end{tabular}}
        \vspace{-5mm}
\end{table*}

\section{System Model and Problem Formulation}\label{sys}
\subsection{System Model}
\begin{figure}[t]
        \centering
        \includegraphics[scale = 0.26]{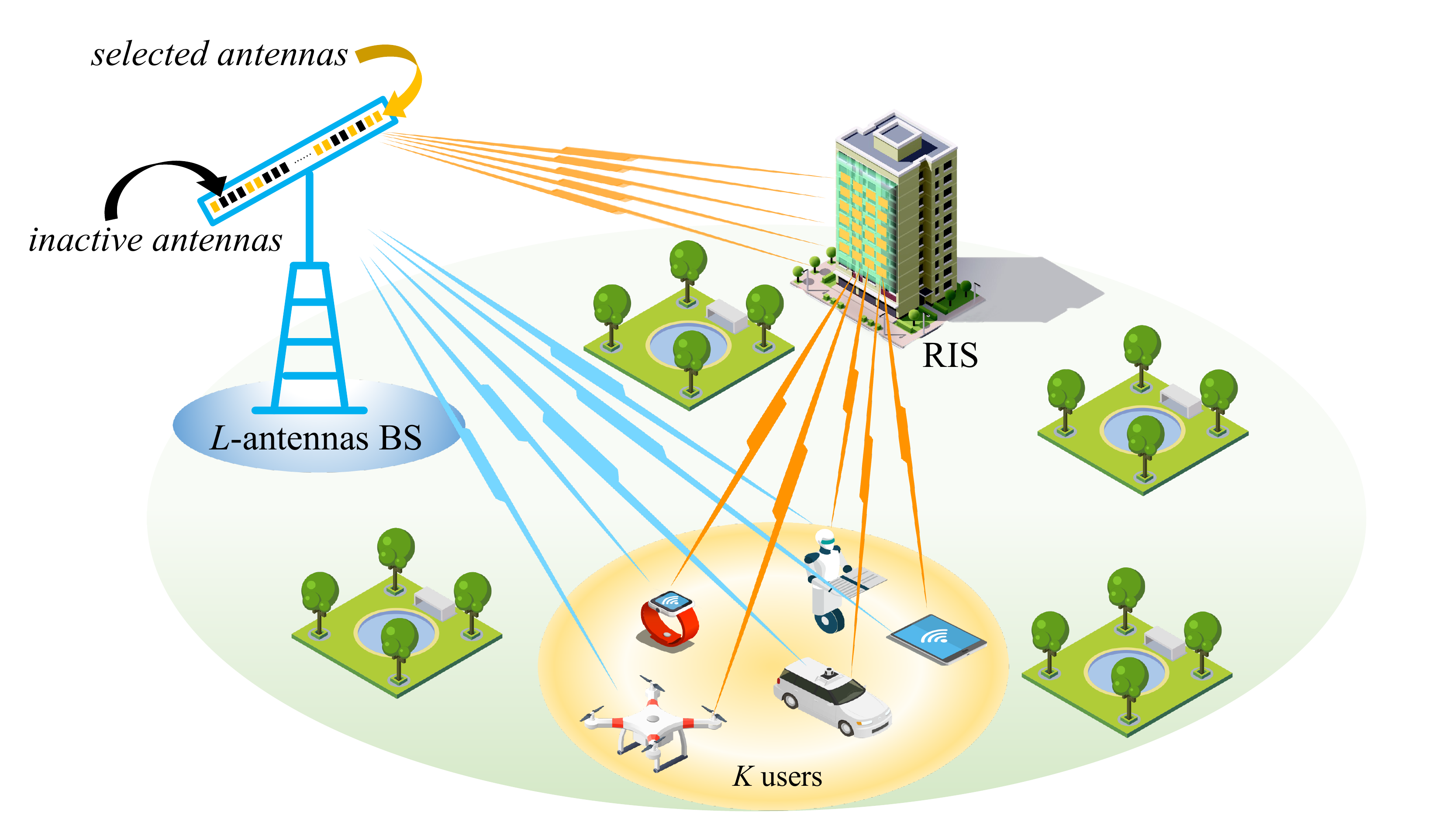}
        \caption{An RIS-assisted massive MIMO system with antenna selection.}
        \label{model}
\end{figure}
We consider a downlink massive MIMO communication system consisting of an $L$-antenna BS and $K$ single-antenna mobile users, as shown in Fig. 1, where an RIS equipped with $N$ passive reflecting elements is deployed to enhance the communication performance. We denote $\mathcal{L}=\{1,\ldots,L\}$ as the index set of antennas at the BS and $\mathcal{K}=\{1,\ldots,K\}$ as the index set of mobile users. Each reflecting element of the RIS can dynamically adjust the phase shift according to the CSI. Although impressive improvements in capacity are achieved in massive MIMO communication systems, the cost and hardware complexity scale with the number of antennas. To alleviate these drawbacks, dynamically selecting antennas becomes critical for achieving a favorable balance between performance and hardware complexity \cite{2018aSimple}. Hence, we denote $\mathcal{S}\subseteq \mathcal{L}$ as the index set of selected active antennas, where $N_S=|\mathcal{S}|$ denotes the number of active antennas.

Let $\bm{\hat{H}}(\mathcal{S})=[\bm{\hat{h}}_{1}(\mathcal{S}),\ldots,\bm{\hat{h}}_{K}(\mathcal{S})]^{\sf H}\in \mathbb{C}^{K\times N_S}$, $\bm{T}(\mathcal{S})=[\bm{t}_1(\mathcal{S}),\ldots,\bm{t}_N(\mathcal{S})]^{\sf H}\in \mathbb{C}^{N\times N_S}$, and $\bm{R}=[\bm{r}_{1},\ldots,\bm{r}_{K}]^{\sf H}\in \mathbb{C}^{K\times N}$ denote the channel matrix from the selected BS antennas to the mobile users, the channel matrix from the selected BS antennas to the RIS,  and the channel matrix from the RIS to the mobile users, respectively. Due to severe path loss, we assume that the signal reflected by the RIS more than once has negligible power and thus can be ignored \cite{2019Reconfigurable}. We consider a quasi-static block-fading channel model. Thus, the effective massive MIMO channel matrix from the BS to the mobile users is given by $\bm{H}(\mathcal{S},\bm{\Theta})=[\bm{h}_1,\ldots,\bm{h}_K]^{\sf H}=\bm{\hat{H}}(\mathcal{S})+\bm{R\Theta }\bm{T}(\mathcal{S})\in \mathbb{C}^{K\times N_{\mathcal{S}}}$, where $\bm{\Theta}=\diagg(\beta_1,\ldots,\beta_N)\in \mathbb{C}^{N\times N}$ is the diagonal reflection matrix of the RIS \cite{2019zhangCapacity} and $\beta_n\in \mathbb{C}$ is the reflection coefficient of the $n$-th RIS element. We assume that $|\beta_n|=1$ and the phase of $\beta_n$ can be flexibly adjusted in $[0,2\pi)$ \cite{cui2014coding}. 

Let $\tilde{\bm x}\in \mathbb{C}^{N_S\times 1}$ denote the transmitted signal vector across the $N_{S}$ selected antennas at the BS, and satisfies $\mathbb{E}[\|\tilde{\bm x}\|_2^2]=1$. For the sake of simplicity, we assume that the transmit power per user is fixed. The signal $\bm y(\mathcal{S},\bm{\Theta}) \in \mathbb{C}^{K\times 1}$ received at mobile users is given by 
\begin{eqnarray}
\bm y(\mathcal{S},\bm{\Theta})=\sqrt{P}\bm{H}(\mathcal{S},\bm{\Theta})\tilde{\bm{x}}+\bm z,
\end{eqnarray}
where $P$ denotes the transmit power at the BS and $\bm z\in \mathbb{C}^{K\times 1}$ denotes the additive white Gaussian noise (AWGN) vector with $z_k\sim\mathcal{CN}(0,\sigma^2), \ k\in\mathcal{K}$. The downlink sum capacity is given by \cite{Multi-cell}
\begin{eqnarray}\label{capacity}
C_{\bm{H}}(\mathcal{S},\bm{\Theta})= \text{log}_2\text{det}(\bm{I}+\text{snr}\bm{H}(\mathcal{S},\bm{\Theta})^{\sf H}\bm{H}(\mathcal{S},\bm{\Theta})),
\end{eqnarray}
where ${\sf{snr}}=P/\sigma^2$ is the signal-to-noise ratio under equal {power} allocation. Note that the channel under consideration is different from the conventional massive MIMO channel without RIS. 
The capacity of conventional massive MIMO without RIS only depends on the channel matrix $\bm{\hat{H}}(\mathcal{S})$. As the RIS-assisted massive MIMO channel matrix $\bm{H}(\mathcal{S},\bm{\Theta})$ includes the RIS reflection matrix $\bm{\Theta}$ and the selected antennas set $\mathcal{S}$, the capacity given in (\ref{capacity}) depends on both $\bm{\Theta}$ and $\mathcal{S}$.

\subsection{Problem Formulation}

In this paper, we propose to enable RIS-assisted massive MIMO capacity maximization via joint antenna selection at BS and passive beamforming at RIS, while considering the cardinality constraint of the total number of active antennas and the unit-modular constraints of all RIS elements. For ease of exposition, we first focus on an ideal scenario which assumes that perfect instantaneous CSI is available at the BS. The capacity maximization problem with perfect CSI can be formulated as
\setlength\arraycolsep{2pt}
\begin{eqnarray}\label{p1}
\mathscr{P}1:\ \mathop{\text{maximize}}_{\mathcal{S}\subseteq \mathcal{L}, \bm{\Theta}} && C_{\bm{H}}(\mathcal{S},\bm{\Theta})\\
\textrm{subject to} && |\mathcal{S}|=N_S,\\
%&&\bm{\Theta}=\text{diag}\{\beta_1,\ldots,\beta_N\}\\
\label{p1-4}&&|\beta_n|=1, n=1,\ldots,N.
\end{eqnarray}
However, perfect instantaneous CSI is not always possible to be obtained in practice \cite{shi2015robust,shi2014optimal,love2008overview,jindal2010unified,maddah2012completely}. To address this issue, we further formulate the following ergodic sum capacity maximization problem without any prior knowledge of the underlying channel distribution
\begin{eqnarray}\label{p2}
\mathscr{L}1:\ \mathop{\text{maximize}}_{\mathcal{S}\subseteq \mathcal{L}, \bm{\Theta}} && \mathbb{E}_{\bm{H}\sim \mathcal{D}}[C_{\bm{H}}(\mathcal{S},\bm{\Theta})]\\
\textrm{subject to} && |\mathcal{S}|=N_S,\\
%&&\bm{\Theta}=\text{diag}\{\beta_1,\ldots,\beta_N\}\\
\label{p2-4}&&|\beta_n|=1, n=1,\ldots,N,
\end{eqnarray}
where $\mathcal{D}$ is the underlying channel distribution.
Although problem $\mathscr{P}1$ is easier to be solved than problem $\mathscr{L}1$, both of them turn out to be highly intractable non-convex optimization problems due to the joint optimization of $\mathcal{S}$ and $\bm{\Theta}$ over the non-convex uni-modular constraint and cardinality constraint. To address this challenge, we propose to optimize $\mathcal{S}$ and $\bm{\Theta}$ alternately, resulting in two subproblems including antenna selection and passive beamforming. However, both subproblems are still non-convex due to their non-convex constraints. Hence, we propose to employ submodular optimization techniques for antenna selection, and exploit the unique structures of the objective function for passive beamforming. We elaborate the motivations and challenges at the beginning of Section \ref{section3} and Section \ref{section4}, respectively.

\section{Capacity Maximization with Perfect CSI}
\label{section3}
In this section, we first propose an alternating optimization framework to divide problem $\mathscr{P}1$ into two subproblems, and then solve the resulting antenna selection and passive beamforming subproblems by exploiting their unique structures.

\subsection{Alternating Optimization Framework}

For a fixed phase-shift matrix $\bm{\Theta}$, we write problem $\mathscr{P}1$ as the following antenna selection problem
\begin{eqnarray}\label{p3}
\mathscr{P}2:\ \mathop{\text{maximize}}_{\mathcal{S}\subseteq \mathcal{L}} && C_{\bm{H}}(\mathcal{S},\bm{\Theta})\\
\textrm{subject to} && |\mathcal{S}|=N_S.
\end{eqnarray}
Note that subproblem $\mathscr{P}$2 is a combinatorial optimization problem, for which the exhaustive search method is one of the simplest approaches to find the optimal selected antenna set. However, the large search space $\mathcal{O}(L^{N_S})$ limits its practicability and scalability, especially for massive MIMO with a large number of possible antenna selection \cite{2015Massive}. Actually, subproblem $\mathscr{P}2$ is NP-hard \cite{1995Anexact}. Therefore, we cannot derive an optimal solution in polynomial time. A large number of literatures \cite{2015Massive,2015Multi-switch,2017Reduced} tried to find a suboptimal solution in polynomial time by employing convex relaxations of the feasible selected antennas set. Since the solution of their resulting convex programming problem via convex relaxations is not guaranteed to be feasible, a post-processing fractional rounding step is further incorporated to get a suboptimal solution. However, these convex relaxation approaches are still limited by the high computational complexity $\mathcal{O}(L^{3.5})$ and non-guaranteed optimality \cite{2018aSimple}. To address these limitations, we shall develop an efficient greedy algorithm with $(1-1/e)$ approximation solution via exploiting the monotone and submodular structures of the objective function in Section \ref{section3}-B.

On the other hand, for a given selected antennas set $\mathcal{S}$, problem $\mathscr{P}1$ can be written as the following passive beamforming problem
\begin{eqnarray}\label{p4}
\mathscr{P}3:\ \mathop{\text{maximize}}_{\bm{\Theta}} && C_{\bm{H}}(\mathcal{S},\bm{\Theta})\\
\textrm{subject to} \label{p4-2}&&|\beta_n|=1, n=1,\ldots,N.
\end{eqnarray}
The above subproblem is still non-convex due to the non-concave objective function and non-convex uni-modular constraint. To tackle this challenge, we present the optimal solution in closed-form in Section \ref{section3}-C \cite{2019zhangCapacity}.
\subsection{Greedy Algorithm for Submodular Maximization}
In this subsection, we propose to solve the subproblem of antenna selection by leveraging submodularity and monotonicity of its objective function. Specifically, let $V$ be a ground set of objects $V:=\{v_1,\ldots,v_n\}$, and $2^V$ denote its power set.
\begin{definition}(Submodularity)
        \cite{2005Submodular}: A set function $g: 2^V\rightarrow \mathbb{R}$ is submodular if and only if, for any set $A,B\subseteq V$, we have
        \begin{eqnarray}
        g(A)+g(B)\geq g(A\cap B)+g(A\cup B).
        \end{eqnarray}
\end{definition}
Note that a favorable property of submodular functions is the non-increasing marginal gain. Specifically, we define the marginal gain of the object $v\in V$ as $g(A\cup\{v\})-g(A)$. The marginal gain introduced by adding $v$ to $A$ does not increase when we add $v$ to $B$ with $A\subseteq B\subseteq V\setminus v$. Inspired by this \textit{diminishing returns} property \cite{tohidi2020submodularity}, various greedy algorithms were proposed to find a theoretically guaranteed suboptimal set to maximize the submodular set-functions via iteratively picking an object with maximal marginal gain until satisfying the constraints \cite{2018aSimple}. 
\begin{definition}(Monotonicity of Set Functions)
        \cite{2018aSimple}: A set function $g$ is said to be monotone if $g(A)\leq g(B)$ for all $A\subseteq B\subseteq V$.
\end{definition}
Monotonicity is another key feature of set-functions, which plays a vital role on algorithmic techniques for getting the near-optimal solution of monotone submodular maximization problems. Intuitively, it can further improve the guaranteed approximation ratio of maximizing a set-function only with the submodular structure. We show that
the channel capacity function (\ref{p3}) has both two encouraging characteristics in the following lemma.
\begin{lemma}\label{lemma1}
        The objective set function $C_{\bm{H}}(\mathcal{S},\bm{\Theta})$ (\ref{p3}) of problem $\mathscr{P}2$ is submodular and monotone with respect to $\mathcal{S}$.
\end{lemma}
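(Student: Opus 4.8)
The plan is to recast $C_{\bm{H}}(\mathcal{S},\bm{\Theta})$, for a fixed $\bm{\Theta}$, as a set function built from a sum of rank-one positive semidefinite matrices indexed by the selected antennas, and then establish monotonicity and the diminishing-returns property directly. The first difficulty is that the matrix $\bm{H}^{\sf H}\bm{H}$ in (\ref{capacity}) has size $N_S\times N_S$, which varies with $|\mathcal{S}|$ and obscures the set-function structure. I would remove this by applying Sylvester's determinant identity $\det(\bm{I}_{N_S}+\text{snr}\,\bm{H}^{\sf H}\bm{H})=\det(\bm{I}_K+\text{snr}\,\bm{H}\bm{H}^{\sf H})$, which moves everything into a fixed $K\times K$ space. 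Writing $\bm{g}_\ell$ for the column of $\bm{H}(\mathcal{S},\bm{\Theta})$ contributed by antenna $\ell$ (which, for fixed $\bm{\Theta}$, does not depend on $\mathcal{S}$), we have $\bm{H}\bm{H}^{\sf H}=\sum_{\ell\in\mathcal{S}}\bm{g}_\ell\bm{g}_\ell^{\sf H}$, so that
\begin{eqnarray}
C_{\bm{H}}(\mathcal{S},\bm{\Theta})=\log_2\det\Big(\bm{I}_K+\text{snr}\sum_{\ell\in\mathcal{S}}\bm{g}_\ell\bm{g}_\ell^{\sf H}\Big),
\end{eqnarray}
where we set $\bm{M}_{\mathcal{S}}:=\bm{I}_K+\text{snr}\sum_{\ell\in\mathcal{S}}\bm{g}_\ell\bm{g}_\ell^{\sf H}$. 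Each antenna now contributes one rank-one term, which is the structure I will exploit.

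For monotonicity, I would take $A\subseteq B\subseteq\mathcal{L}$ and note that $\bm{M}_B-\bm{M}_A=\text{snr}\sum_{\ell\in B\setminus A}\bm{g}_\ell\bm{g}_\ell^{\sf H}\succeq\bm{0}$, so $\bm{0}\prec\bm{M}_A\preceq\bm{M}_B$ in the positive semidefinite (L\"owner) order. Since $\log_2\det(\cdot)$ is monotone on the positive definite cone (its gradient is $(\ln 2)^{-1}\bm{M}^{-1}\succ\bm{0}$), this immediately gives $C_{\bm{H}}(A,\bm{\Theta})\le C_{\bm{H}}(B,\bm{\Theta})$, establishing the monotonicity claim.

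For submodularity I would verify the equivalent diminishing-returns characterization already noted in the text, namely that the marginal gain is non-increasing in the conditioning set. Because $\text{snr}\,\bm{g}_v\bm{g}_v^{\sf H}$ is a rank-one update of $\bm{M}_{\mathcal{S}}$, the matrix determinant lemma gives the clean scalar form
\begin{eqnarray}
\Delta(v\mid\mathcal{S}) &:=& C_{\bm{H}}(\mathcal{S}\cup\{v\},\bm{\Theta})-C_{\bm{H}}(\mathcal{S},\bm{\Theta}) \nonumber\\
&=& \log_2\big(1+\text{snr}\,\bm{g}_v^{\sf H}\bm{M}_{\mathcal{S}}^{-1}\bm{g}_v\big).
\end{eqnarray}
For $A\subseteq B$ with $v\notin B$, the ordering $\bm{M}_A\preceq\bm{M}_B$ and the operator antitonicity of matrix inversion yield $\bm{M}_A^{-1}\succeq\bm{M}_B^{-1}$, hence $\bm{g}_v^{\sf H}\bm{M}_A^{-1}\bm{g}_v\ge\bm{g}_v^{\sf H}\bm{M}_B^{-1}\bm{g}_v$; since $t\mapsto\log_2(1+\text{snr}\,t)$ is increasing, $\Delta(v\mid A)\ge\Delta(v\mid B)$, which is exactly the diminishing-returns property and is equivalent to the pairwise inequality in Definition~1.

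The main obstacle I anticipate is handling the variable dimension of $\bm{H}^{\sf H}\bm{H}$: a naive attempt to verify Definition~1 directly with $A\cap B$ and $A\cup B$ runs into matrices of different sizes and is cumbersome. The Sylvester-identity reduction to a fixed $K\times K$ log-det of a sum of rank-one terms is the key move that removes this difficulty, after which the remaining work reduces to the standard L\"owner-order facts (monotonicity of $\log\det$ and antitonicity of inversion) together with the rank-one determinant lemma. A secondary point to check carefully is the equivalence between the diminishing-returns characterization I use and the pairwise-intersection form stated in Definition~1.
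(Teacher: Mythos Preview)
Your proposal is correct, but it takes a genuinely different route from the paper's proof. The paper keeps the $N_S\times N_S$ determinant and views $\bm I+\text{snr}\,\bm H(\mathcal S)^{\sf H}\bm H(\mathcal S)$ as a principal submatrix of the full $L\times L$ matrix $\Delta=\bm I+\text{snr}\,\tilde{\bm H}^{\sf H}\tilde{\bm H}$. It then proves submodularity by an information-theoretic detour: $\log\det$ of a principal submatrix is, up to a constant, the differential entropy of the corresponding Gaussian subvector, and the defining inequality in Definition~1 becomes nonnegativity of the conditional mutual information $I(\bm u_{\mathcal M\setminus\mathcal N};\bm u_{\mathcal N\setminus\mathcal M}\mid\bm u_{\mathcal M\cap\mathcal N})\ge 0$. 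Monotonicity is shown separately by Cramer's rule together with an eigenvalue interlacing bound $\lambda_{\min}(\Delta(\mathcal S\cup\{v\},\mathcal S\cup\{v\}))\ge\lambda_{\min}(\Delta)\ge 1$.

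Your argument instead applies Sylvester's identity to pass to a fixed $K\times K$ matrix that is an \emph{additive} sum of rank-one contributions, and then derives both monotonicity and diminishing returns directly from L\"owner-order facts (monotonicity of $\log\det$ and antitonicity of inversion) and the matrix determinant lemma. This is more self-contained and avoids the entropy/mutual-information machinery; it also produces the explicit marginal-gain formula $\log_2(1+\text{snr}\,\bm g_v^{\sf H}\bm M_{\mathcal S}^{-1}\bm g_v)$, which is exactly the identity the paper later re-derives in Lemma~\ref{lemma:efficient} for fast gradient estimation. The paper's approach, on the other hand, connects the result to the classical fact that Gaussian entropy is a submodular set function, which some readers may find conceptually illuminating. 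Your flagged ``secondary point'' (equivalence of diminishing returns and Definition~1) is standard and does need one line of justification, but is not an obstacle.
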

\begin{proof}
        Please refer to Appendix A.
\end{proof}
Based on Lemma \ref{lemma1}, we can reformulate problem $\mathscr{P}2$ as a submodular maximization problem under the cardinality constraint, thereby yielding a discrete greedy approach. To be specific, it starts from the empty set $\mathcal{S}_0=\varnothing$, and then incrementally adds an element $x \notin \mathcal{S}_{i-1}$ with maximal marginal gain to construct $\mathcal{S}_{i}$ at the $i$-th iteration. Mathematically, the incremental construction rule is given by \cite{2010Greedysensor}:
\begin{eqnarray}
\mathcal{S}_{i+1}\!=\!\mathcal{S}_{i}\!\cup\! \left\{\mathop{\arg\max}_{x\notin \mathcal{S}_{i}}\{C_{\bm{H}}(\mathcal{S}_{i}\!\cup\! \{x\},\bm{\Theta})
\!-\!C_{\bm{H}}(\mathcal{S}_{i},\bm{\Theta})\}\right\},\ \ \
\end{eqnarray}
where $i=0,\ldots,N_S-1$. We summarize the greedy algorithm to solve problem $\mathscr{P}$2 in Algorithm \ref{alg1}, and further employ $\mathcal{S}^{*}$ to denote the solution of the selected antennas set.
\begin{algorithm}
        \label{alg1}
        \SetKwData{Left}{left}\SetKwData{This}{this}\SetKwData{Up}{up}
        \SetKwInOut{Input}{Input}\SetKwInOut{Output}{output}
        \Input{$\bm{\hat{H}}$, $\bm{R}$, $\bm{T}$, $\bm{\Theta}$, $\mathcal{L}$, $\mathcal{S}_0=\varnothing$.}
        \For{$i =1,2,\ldots, N_S$}{
                $x\leftarrow \mathop{\arg\max}\limits_{x\notin \mathcal{S}_{i-1}}C_{\bm{H}}(\mathcal{S}_{i-1}\cup \{x\},\bm{\Theta})
                -C_{\bm{H}}(\mathcal{S}_{i-1},\bm{\Theta})$;\\
                $
                \mathcal{S}_{i}\leftarrow\mathcal{S}_{i-1}\cup \{x\}$;
        }
        \Output{$\mathcal{S}^{*}\leftarrow\mathcal{S}_{N_S}$}
        \caption{Greedy Algorithm for Problem  $\mathscr{P}$2.}
        \label{algo1}
\end{algorithm}

Clearly, the proposed greedy algorithm only needs $O(N_{S}L)$ measurements of the objective function $C_{\bm{H}}(\mathcal{S},\bm{\Theta})$, which is theoretically and practically more efficient than convex relaxation approaches with complexity $O(L^{3.5})$. Moreover, the quality of the suboptimal solution can be guaranteed according to the following lemma \cite{1978AnAnalysis}.
\begin{lemma}
        The proposed greedy algorithm as shown in Alg. \ref{alg1}, for the formulated monotone submodular maximization problem under a cardinality constraint (\ref{p3}) yields a suboptimal solution $\mathcal{S}^*$ with $(1-1/e)$ approximation ratio for any problem instance. That is,
        \begin{eqnarray}
        C_{\bm{H}}(\mathcal{S}_{opt},\bm{\Theta})\geq C_{\bm{H}}(\mathcal{S}^*,\bm{\Theta})\geq (1-1/e)C_{\bm{H}}(\mathcal{S}_{opt},\bm{\Theta}),\ \ \ 
        \end{eqnarray}
        where $\mathcal{S}_{opt}$ denotes the optimal solution of problem $\mathscr{P}2$ \cite{1978AnAnalysis}.
\end{lemma}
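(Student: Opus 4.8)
The plan is to follow the classical argument of Nemhauser, Wolsey, and Fisher \cite{1978AnAnalysis}, treating $\bm{\Theta}$ as fixed throughout and abbreviating $f(\mathcal{S}) := C_{\bm{H}}(\mathcal{S},\bm{\Theta})$. The upper bound $C_{\bm{H}}(\mathcal{S}_{opt},\bm{\Theta}) \geq C_{\bm{H}}(\mathcal{S}^*,\bm{\Theta})$ is immediate, since $\mathcal{S}_{opt}$ is by definition an optimal solution of problem $\mathscr{P}2$, so all the work lies in the lower bound. I would first normalize by observing that the empty selection produces an empty channel matrix, whence $f(\varnothing) = \log_2 \det(\bm{I}) = 0$. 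Let $\mathcal{S}_i$ denote the greedy iterate after $i$ steps, so that $\mathcal{S}_0 = \varnothing$ and $\mathcal{S}_{N_S} = \mathcal{S}^*$, and write $\delta_{i+1} := f(\mathcal{S}_{i+1}) - f(\mathcal{S}_i)$ for the marginal gain selected in iteration $i+1$ of Algorithm \ref{alg1}.

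The key step is the per-iteration inequality
\begin{equation}
f(\mathcal{S}_{opt}) - f(\mathcal{S}_i) \leq N_S \, \delta_{i+1}.
\end{equation}
To establish it I would enumerate $\mathcal{S}_{opt} \setminus \mathcal{S}_i = \{o_1,\ldots,o_m\}$ with $m \leq N_S$ and telescope the gain of adjoining these elements to $\mathcal{S}_i$ one at a time. Monotonicity (Lemma \ref{lemma1}) gives $f(\mathcal{S}_{opt}) \leq f(\mathcal{S}_{opt} \cup \mathcal{S}_i)$; submodularity (Lemma \ref{lemma1}), in its diminishing-returns form, bounds each telescoped increment $f(\mathcal{S}_i \cup \{o_1,\ldots,o_j\}) - f(\mathcal{S}_i \cup \{o_1,\ldots,o_{j-1}\})$ by the single-element marginal gain $f(\mathcal{S}_i \cup \{o_j\}) - f(\mathcal{S}_i)$; and since $o_j \notin \mathcal{S}_i$ was available but not chosen, the greedy rule ensures each such single-element gain is at most $\delta_{i+1}$. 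Summing the $m \leq N_S$ terms yields the claimed inequality. I expect this to be the main obstacle, as it is the only place where all three ingredients, namely monotonicity, submodularity, and the greedy selection criterion, must be combined simultaneously and in the correct order.

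With the per-iteration inequality in hand, the remainder is a routine recursion. Setting $\Delta_i := f(\mathcal{S}_{opt}) - f(\mathcal{S}_i)$ and noting $\delta_{i+1} = \Delta_i - \Delta_{i+1}$, the inequality rearranges to $\Delta_{i+1} \leq (1 - 1/N_S)\,\Delta_i$. Unrolling from $\Delta_0 = f(\mathcal{S}_{opt})$ across the $N_S$ greedy iterations gives $\Delta_{N_S} \leq (1 - 1/N_S)^{N_S} f(\mathcal{S}_{opt})$. Finally I would invoke the elementary bound $(1 - 1/N_S)^{N_S} \leq e^{-1}$, which follows from $1 - x \leq e^{-x}$, to obtain $\Delta_{N_S} \leq e^{-1} f(\mathcal{S}_{opt})$, that is $f(\mathcal{S}^*) = f(\mathcal{S}_{opt}) - \Delta_{N_S} \geq (1 - 1/e) f(\mathcal{S}_{opt})$, which together with the trivial upper bound completes the proof.
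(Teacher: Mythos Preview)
Your proposal is correct and reproduces precisely the classical Nemhauser--Wolsey--Fisher argument from \cite{1978AnAnalysis}; the paper itself does not spell out a proof but simply invokes that reference, so your write-up is the standard argument the paper defers to. The only minor remark is that the normalization $f(\varnothing)=0$ is indeed needed (and holds here as you note), since the $(1-1/e)$ bound in its stated multiplicative form relies on it.
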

Note that the problem of obtaining a better worst-case approximation guarantee for problem $\mathscr{P}2$ is also NP-hard, which principally demonstrates that our proposed greedy algorithm is an optimal polynomial-time approximation algorithm \cite{1978Best}. 

\subsection{Passive Beamforming with Perfect CSI}
\label{3c}
For the given selected antennas set $\mathcal{S}^{*}$, we shall optimize the RIS phase-shift matrix by solving problem $\mathscr{P}$3. Specifically, we propose to iteratively optimize one variable (i.e., $\beta_n$) while keeping other $N-1$ variables fixed based on the principle of block coordinate descent. We formulate a non-convex problem to optimize $\beta_{n}$ with given $\mathcal{S}^*$ and $\{\beta_{j},j\neq n\}_{j=1}^N$, and then obtain an optimal solution of the resulted subproblem in closed-form via exploiting its unique structure \cite{2019zhangCapacity}. For the ease of presentation, we further revisit the following notations adopted to formulate the subproblem. Let $\bm{R}=[\bm{\rho}_1,\ldots,\bm{\rho}_N]$ and $\bm{T}(\mathcal{S}^*)=[\bm{t}_1(\mathcal{S}^*),\ldots,\bm{t}_N(\mathcal{S}^*)]^{\sf H}$, where $\bm{\rho}_n \in \mathbb{C}^{K\times 1}$ and $\bm{t}_n(\mathcal{S}^*)\in \mathbb{C}^{N_S\times 1}$. We can obtain the following subproblem with respect to $\beta_{n}$ according to \cite{2019zhangCapacity}
\begin{subequations}\label{p3n}
        \begin{align}
        \mathop{\text{maximize}}_{\beta_n}\ & \text{log}_2\text{det}\big(\bm{P}_n(\mathcal{S}^*)\!\!+\!\!\beta_n\bm{Q}_n(\mathcal{S}^*)\!\!+\!\!\beta_n^\natural\bm{Q}_n(\mathcal{S}^*)^{\sf H}\big)\\
        \textrm{subject to}\  & \label{p3n-1}|\beta_n|=1,\end{align}
\end{subequations}
where $\bm{P}_n(\mathcal{S}^*)$ and $\bm{Q}_n(\mathcal{S}^*)$ can be further expressed as
\begin{eqnarray}
\bm{P}_n(\mathcal{S}^*)=\bm{I}_{K}+{\sf{snr}}\bm{t}_n(\mathcal{S}^*)\bm{\rho}_n^{\sf H}\bm{\rho}_n{\bm{t}_n(\mathcal{S}^*)}^{\sf H}+\ \ \ \ \ \ \ \ \ \ \ \ \ \ \ \ \ \ \ \ \ \nonumber\\
\text{snr}\Bigg(\!\!\bm{\hat{H}}(\mathcal{S^*})\!+\!\!\!\!\!\!\sum\limits_{i=1,i\neq n}^{N}\!\!\!\!\beta_i\bm{\rho}_i{\bm{t}_i(\mathcal{S}^*)}^{\sf H}\!\Bigg)\!\Bigg(\!\!\bm{\hat{H}}(\mathcal{S^*}\!)\!+\!\!\!\!\!\!\sum \limits_{i=1,i\neq n}^{N}\!\!\!\!\beta_i\bm{\rho}_i{\bm{t}_i(\mathcal{S}^*)}^{\sf H}\!\Bigg)^{\sf H}\!\!,\ \ \nonumber\\
\bm{Q}_n(\mathcal{S}^*)={\sf{snr}}\bm{\rho}_n{\bm{t}_n(\mathcal{S}^*)}^{\sf H}\Bigg(\bm{H}(\mathcal{S}^*)^{\sf H}+\sum \limits_{i=1,i\neq n}^{N}\beta_i^\natural\bm{t}_i(\mathcal{S}^*)\bm{\rho}_i^{\sf H}\Bigg).\ \ \nonumber
\end{eqnarray}

Since both $\bm{P}_n(\mathcal{S}^*)$ and $\bm{Q}_n(\mathcal{S}^*)$ are independent of $\beta_n$, the objective function of subproblem (\ref{p3n}) is concave over $\beta_n$. However, it is still intractable and non-convex due to the uni-modular constraint (\ref{p3n-1}). To address this challenge, we exploit unique structures of $\bm{P}_n(\mathcal{S}^*)$ and $\bm{Q}_n(\mathcal{S}^*)$, yielding the following optimal solution in closed-form \cite{2019zhangCapacity}
\begin{eqnarray}
\label{eqn_update}
\beta_n^{\star}=
\begin{cases}
e^{-j\arg \{\lambda_n\}},& \text{if}\ \  \text{Tr}(\bm{P}_n(\mathcal{S}^*)^{-1}\bm{Q}_n(\mathcal{S}^*))\neq 0 \\
0,& \text{otherwise},
\end{cases}
\end{eqnarray}
where $\lambda_n$ represents the only non-zero eigenvalue of $\bm{P}_n(\mathcal{S}^*)^{-1}\bm{Q}_n(\mathcal{S}^*)$. Based on the above solution, we present an algorithm for solving problem $\mathscr{P}3$, which is summarized in Alg. \ref{alg2}. Specifically, it first randomly generates $\{\beta_{n}\}_{n=1}^N$ with $|\beta_{n}|=1$ and phases of each $\beta_{n}$ following the uniform distribution over $[0,2\pi)$. Then, we iteratively update each $\beta_{n}$ with others being fixed based on (\ref{eqn_update}) until convergence.
\begin{algorithm}
        \label{alg2}
        \SetKwData{Left}{left}\SetKwData{This}{this}\SetKwData{Up}{up}
        \SetKwInOut{Input}{Input}\SetKwInOut{Output}{output}
        \Input{$\bm{\hat{H}}$, $\bm{R}$, $\bm{T}$, $\mathcal{S}^{*}$.}
        Randomly generate $\{\beta_{n}\}_{n=1}^N$ with $|\beta_{n}|=1$.\\
        \For{$n =1,2,\ldots,N$}{
                \textbf{if} {$\text{Tr}(\bm{P}_n(\mathcal{S}^*)^{-1}\bm{Q}_n(\mathcal{S}^*))\!\neq\! 0$} \textbf{then} $\beta_n^{\star}\!=\!e^{-j\arg \{\lambda_n\}}$\\
                \textbf{else} $\beta_n^{\star}=0$
        }
        If not convergence, go to Step 2; otherwise, stop.\\
        \Output{ $\bm{\Theta}^{\star}=\text{diag}(\beta_1^{\star},\ldots,\beta_N^{\star})$}
        \caption{Proposed Algorithm for Problem  $\mathscr{P}$3.}
\end{algorithm}

Note that we can obtain the optimal solution of every subproblem with respect to each $\beta_{n}$, thereby yielding non-decreasing objective values of problem $\mathscr{P}3$ over iterations. Therefore, Alg. \ref{alg2} is guaranteed to be monotonic convergence.

\section{Capacity Maximization Based on Channel Realizations}
\label{section4}
As it is generally difficult to obtain perfect CSI in the RIS-assisted massive MIMO systems due to the high channel training overhead \cite{liaskos2018new,liang2019large,huang2019holographic}. We shall propose an alternating optimization framework to divide problem $\mathscr{L}$1 into two subproblems, and then solve the resulting antenna selection and passive beamforming subproblems of problem $\mathscr{L}1$ only based on the historical channel realizations without any prior knowledge of channel distribution. To be specific, we first formulate the antenna selection subproblem via the alternating optimization framework, then reformulate the antenna selection subproblem as a stochastic submodular maximization problem, for which a scalable stochastic projected gradient algorithm is developed. To reduce the complexity, we further propose a faster gradient estimating approach. 
In Section \ref{section4}-C, we propose to solve passive beamforming subproblem of problem $\mathscr{L}1$ via exploiting its unique structure only based on the historical channel realizations.
\subsection{Alternating Optimization Framework}
We first decouple the optimization variables in the objective function $\mathbb{E}_{\bm{H}\sim \mathcal{D}}[C_{\bm{H}}(\mathcal{S},\bm{\Theta})]$ of problem $\mathscr{L}$1. For a given matrix $\bm{\Theta}$,
the antenna selection subproblem is given by
\begin{eqnarray}\label{L2}
\mathscr{L}2:\ \mathop{\text{maximize}}_{\mathcal{S}\subseteq \mathcal{L}} && \mathbb{E}_{\bm{H}\sim \mathcal{D}}[C_{\bm{H}}(\mathcal{S},\bm{\Theta})]\\
\textrm{subject to} && |\mathcal{S}|=N_S,
\end{eqnarray}
where $\mathcal{D}$ is the underlying channel distribution. Subproblem $\mathscr{L}2$ turns out to be a highly intractable stochastic combinatorial optimization problem. Note that the previous literatures on antenna selection either assume the availability of the instantaneous CSI \cite{2018aSimple} or the statistical CSI \cite{2018Enhanced}. However, it is challenging to evaluate the exact and analytic expressions of ergodic sum capacity, even when the channel distribution is available. Moreover, although the high-dimensional random matrix theory can be employed to derive the deterministic approximations for the ergodic channel capacity, it is still difficult to further optimize the complicated and approximate expression \cite{2011random}. We instead propose to solve antenna selection subproblem $\mathscr{L}2$ only based on the historical channel realizations in Section \ref{section4}-B.

On the other hand, we continue to decouple the optimization variables $\mathcal{S}$ and $\bm{\Theta}$ in the objective function $\mathbb{E}_{\bm{H}\sim \mathcal{D}}[C_{\bm{H}}(\mathcal{S},\bm{\Theta})]$ of problem $\mathscr{L}1$. For the fixed active antenna set $\mathcal{S}$, the passive beamforming subproblem of $\mathscr{L}1$ can be further formulated as follows
\begin{eqnarray}
\mathscr{L}3:\ \mathop{\text{maximize}}_{\bm{\Theta}} && \mathbb{E}_{\bm{H}\sim \mathcal{D}}[C_{\bm{H}}(\mathcal{S},\bm{\Theta})]\\
\textrm{subject to} && |\beta_n|=1, n=1,\ldots,N,
\end{eqnarray}
where $\mathcal{D}$ is an unknown channel distribution. Note that the above formulation turns out to be highly intractable and non-convex due to the complicated expression of ergodic sum capacity and non-convex constraint. The solution to passive beamforming subproblem $\mathscr{L}$3 will be explored in more detail in Section \ref{section4}-C.

\subsection{Antenna Selection Subproblem}
In this paper, we propose to solve problem $\mathscr{L}2$ only based on the historical channel realizations. Due to the submodular and monotone structures of the objective function with respect to each channel realization, we can directly extend the mentioned greedy algorithm (Alg. \ref{alg1}) to a \textit{simple greedy} algorithm for this stochastic setting.  Specifically, we collect $s$ historical channel realizations from the unknown distribution, and then turn to directly optimize the following empirical objective function
\begin{eqnarray}
\mathop{\text{maximize}}_{\mathcal{S}\subseteq \mathcal{L}}\  \frac{1}{s}\sum_{i=1}^{s}C_{\bm{H}_i}(\mathcal{S},\bm{\Theta})\ \ \ 
\textrm{subject to}\ \ |\mathcal{S}|=N_S.
\end{eqnarray} 
Since the empirical objective function is also submodular and monotone, we can further employ Alg. \ref{alg1} to obtain a $(1-1/e)$ suboptimal solution \cite{1978AnAnalysis}. However, it principally needs a great quantity of samples, which restricts its scalability and practicality for large-scale RIS-assisted massive MIMO systems.

To address the scalability issue, we propose to convert problem $\mathscr{L}2$ into the continuous domain, yielding a stochastic submodular maximization problem. Then, various continuous optimization techniques can be further utilized to design scalable algorithms with theoretical guarantees.  

\subsubsection{Overview of Continuous Submodularity and Matroid}
Before lifting problem $\mathscr{L}2$ into the continuous domain, we first revisit some useful definitions. Since a large number of literatures have considered the submodular function in discrete domains \cite{2018aSimple,2010Greedysensor}, the submodular function can be naturally extended to arbitrary lattices \cite{2005Submodular}. Thus, we have the following definition. 
\begin{definition}(Smooth Submodular Function) \cite{1982Maximizing}:
        Let $\bm{D}=D_1\times D_2\times \ldots \times D_n$ denote a subset of $\mathbb{R}_{+}^n$, where $D_i$ is a compact subset of $\mathbb{R}_{+}$. A continuous function $\Psi$: $\bm{D}\rightarrow \mathbb{R}_{+}$ is smooth submodular if and only if for all $\bm{x}$, $\bm{y}\in \bm{D}$, we have 
        \begin{eqnarray}
        \Psi(\bm{x})+\Psi(\bm{y})\geq \Psi(\bm{x}\vee \bm{y})+\Psi(\bm{x}\wedge \bm{y}),
        \end{eqnarray}
        where $\bm{x}\vee \bm{y}=\max (\bm{x},\bm{y})$ and $\bm{x}\wedge \bm{y}=\min (\bm{x},\bm{y})$.
\end{definition}
Similarly, smooth submodular functions also keep the property of diminishing returns with respect to the definition of marginal gain such that $\Psi(\bm{x}+z_ie_i)\!-\!\Psi(\bm{x})$, $e_i\!\!\in\!\!\mathbb{R}^n$ and $z_i\!\in\!\mathbb{R}_+$.
\begin{definition}{(Monotonicity of Continuous Functions)} \cite{1982Maximizing}:
        A continuous function $\Psi$ is said to be monotone, if $\Psi(\bm{x})\leq \Psi(\bm{y})$ for all $\bm{x}$, $\bm{y}\in \bm{D}$ and $\bm{x}\leq \bm{y}$. Here, $\bm{x}\leq \bm{y}$ means every element of $\bm{x}$ is less than that of $\bm{y}$.
\end{definition}
Intuitively, for monotone functions, submodularity can be further restricted to be equivalent to $\Psi(\bm{x}+z_ie_i)-\Psi(\bm{x})$ keeping non-increasing for every fixed $z_i$ and $e_i$. We combine the above two properties, and formally give the following definition.  
\begin{definition}(Smooth Monotone Submodular Function) \cite{1982Maximizing}:
        A continuous function $\Psi(y)$: $[0,1]^{X}\rightarrow \mathbb{R}$ is said to be smooth monotone submodular if $\Psi$ has the following three properties:
        \begin{itemize}
                \item The function $\Psi(y)$ has second derivatives everywhere.
                \item For $\forall j\in X$, $\frac{\partial \Psi}{\partial y_j}\geq 0$ holds everywhere. (monotone)
                \item For $\forall i, j\in\!\! X$, $\frac{\partial^2\Psi}{\partial y_i \partial y_j}\!\leq\! 0$ holds everywhere. (submodular)
        \end{itemize}
\end{definition}
Based on the above definitions, we can further observe that a smooth continuous monotone submodular function \cite{1982Maximizing} is concave along any non-negative direction vector. This kind of function with diminishing returns has been well exploited as special submodular functions called DR-submodular \cite{2017GradientMethods, 2015AGeneralization}. 
Note that maximizing a monotone submodular function $g(A)$ or $\Psi(\bm{x})$ without any constraints is trivial, yielding an optimal solution such as the ground set $V$ or $\bm{D}$. To formulate practical problems, we often solve them subject to some constraints on $A$ or $\bm{x}$, which can be described by a matroid.  
\begin{definition}(Matroid)
        \cite{oxley2006matroid}: A finite matroid $\mathcal{M}=(X,\mathcal{I})$, where $X$ is a finite set named the ground set and $\mathcal{I}$ is a collected subsets of $X$ named the independent subsets of $X$ with the following properties:
        \begin{itemize}
                \item $\varnothing$ denotes the empty set, $\varnothing \in \mathcal{I}$.
                \item For all $A\subseteq B\subseteq X$, if $B\in \mathcal{I}$ then $A \in \mathcal{I}$.
                \item If any $A, B \in \mathcal{I}$ and $|A|\leq |B|$, there exists $x \in B\backslash A$, $A+x \in \mathcal{I}$.
        \end{itemize}
\end{definition}
However, the greedy algorithm only obtains a $1/2$-approximation solution for general matroids \cite{2011Maximizing}. For some special cases of matroids such as \textit{uniform matroid}, it is possible to improve the approximation factor to $(1-1/e)$. We thus present a special matroid constraint employed in our formulation.   
\begin{definition}(Matroid Polytope) \cite{2003Combinatorial}:
        With a given matroid $\mathcal{M}=(\mathcal{X},\mathcal{I})$, the matroid polytope is defined as $P(\mathcal{M})=\{\bm{x}\geq 0:\forall \mathcal{S} \subseteq \mathcal{X}; \ \sum_{l \in \mathcal{S}}^{}x_l\leq R_{\mathcal{M}}(\mathcal{S})\}$, where $R_\mathcal{M}$ denotes the rank function of a given matroid $\mathcal{M}$ that is $R_\mathcal{M}(\mathcal{S})=\text{max}\{|I|: I \subseteq \mathcal{S}$, $I \in \mathcal{I}\}$.
\end{definition}
Note that the matroid polytope is a bounded convex body \cite{1986matroids}. Moreover, another favorable property is down-monotone. To be specific, a polytope $P\in \mathbb{R}_+^X$ is said to be down-monotone if for $\forall \bm{x},\bm{y}$ such that $0\leq \bm{x}\leq\bm{y}$ and $\bm{y}\in P$, we have $\bm{x}\in P$. For problem $\max\{\Psi(\bm{x}),\bm{x}\in P(\mathcal{M})\}$, the optimum solution $\bm{x}^*$ is guaranteed to satisfy $\sum \bm{x}^*_i=R_{\mathcal{M}}(\mathcal{S})$.

\subsubsection{Stochastic Submodular Maximization}
In this subsection, we propose to lift the discrete domain problem $\mathscr{L}$2 into the continuous domain to facilitate the scalable algorithm design, yielding a continuous submodular maximization problem. To be specific, we first define the continuous function $f: \{0,1\}^{\mathcal{L}}\rightarrow \mathbb{R}$ as $f_{\bm{H}}(\bm{\hat{x}})=C_{\bm{H}}\Big(\bigcup \limits_{\hat{x}_i=1}\{i\},\bm{\Theta}\Big)$, where $\bm{\hat{x}}$ represents a random vector in $\{0,1\}$ in which each entry denotes whether the antenna is selected, i.e., $\hat{x}_i=1$ (resp. $\hat{x}_i=0$) if the $i$-th antenna is selected (resp. not selected). Thus we can make an \textit{multilinear extension} to a continuous function $F: [0,1]^{\mathcal{L}}\rightarrow \mathbb{R}$ as follows \cite{2011Maximizing}
\begin{eqnarray}
F(\bm{x})=\mathbb{E}_{\bm{H}}[f_{\bm{H}}(\bm{\hat{x}})]
=\mathbb{E}_{\bm{H}}\![\sum \limits_{\mathcal{S}\subseteq \mathcal{L}}\!C_{\bm{H}}(\mathcal{S},\bm{\Theta})\!\prod \limits_{i\in \mathcal{S}}x_i\! \prod \limits_{j\notin \mathcal{S}}(1-x_j)],\nonumber
\end{eqnarray}
where $\bm{H}$ is drawn from a distribution $\mathcal{D}$, $\bm{x} \in [0,1]^{\mathcal{L}}$ and $\bm{\hat{x}} \in \{0,1\}^{\mathcal{L}}$. Furthermore, in vector $\bm{\hat{x}}$ each coordinate $\hat{x}_i$ is independently rounded to 1 with probability $x_i$ and 0, otherwise.

Note that $F(\bm{x})$ is the expectation of $C_{\bm{H}}(\mathcal{S},\bm{\Theta})$, where the selected antennas set $\mathcal{S}$ is determined by the probability vector $\bm{x}$. Therefore, problem $\mathscr{L}$2 can be equivalently solved by maximizing the continuous function $F(\bm{x})$. To efficiently solve problem $\mathscr{L}$2, we shall further exploit the unique properties of submodularity and monotonicity in continuous domain. Based on the definitions, we obtain the following lemma.

\begin{lemma}
        The function $F(\bm{x})$ is smooth monotone submodular \cite{1982Maximizing}.
\end{lemma}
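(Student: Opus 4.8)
The plan is to verify that $F(\bm{x})$ satisfies the three defining properties of a smooth monotone submodular function (Definition of Smooth Monotone Submodular Function), namely that it is twice differentiable, has nonnegative first partials, and has nonpositive mixed second partials. The crucial observation is that the multilinear extension $F(\bm{x})=\mathbb{E}_{\bm{H}}[\sum_{\mathcal{S}\subseteq\mathcal{L}}C_{\bm{H}}(\mathcal{S},\bm{\Theta})\prod_{i\in\mathcal{S}}x_i\prod_{j\notin\mathcal{S}}(1-x_j)]$ is \emph{multilinear} in the coordinates of $\bm{x}$: each $x_i$ appears with degree at most one in every summand. Consequently $F$ is a polynomial in $\bm{x}$, so it is smooth (infinitely differentiable) and the first property is immediate, and moreover $\partial^2 F/\partial x_i^2=0$, which is why only the mixed partials $i\neq j$ matter in the submodularity condition.

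First I would record the standard multilinear-extension identities for the partial derivatives. Writing $F(\bm{x})=\mathbb{E}_{\bm{H}}[f_{\bm{H}}(\bm{\hat{x}})]$ where the entries of $\bm{\hat{x}}$ are independent Bernoulli$(x_i)$, differentiating in $x_i$ gives
\begin{eqnarray}
\frac{\partial F}{\partial x_i}=\mathbb{E}_{\bm{H}}\big[f_{\bm{H}}(\bm{\hat{x}};\hat{x}_i{=}1)-f_{\bm{H}}(\bm{\hat{x}};\hat{x}_i{=}0)\big],\nonumber
\end{eqnarray}
i.e.\ the expected discrete marginal gain of element $i$ over a random set drawn according to the remaining coordinates. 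Similarly, for $i\neq j$,
\begin{eqnarray}
\frac{\partial^2 F}{\partial x_i\partial x_j}=\mathbb{E}_{\bm{H}}\big[f_{\bm{H}}(\hat{x}_i{=}1,\hat{x}_j{=}1)-f_{\bm{H}}(\hat{x}_i{=}1,\hat{x}_j{=}0)-f_{\bm{H}}(\hat{x}_i{=}0,\hat{x}_j{=}1)+f_{\bm{H}}(\hat{x}_i{=}0,\hat{x}_j{=}0)\big],\nonumber
\end{eqnarray}
which is the expectation of a second-order discrete difference of $f_{\bm{H}}$.

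Next I would invoke Lemma \ref{lemma1}, which establishes that for each fixed channel realization $\bm{H}$ the set function $C_{\bm{H}}(\mathcal{S},\bm{\Theta})$ is monotone and submodular in $\mathcal{S}$. Monotonicity of the discrete function means the marginal gain $f_{\bm{H}}(\hat{x}_i{=}1)-f_{\bm{H}}(\hat{x}_i{=}0)$ is nonnegative for every configuration of the other coordinates, hence $\partial F/\partial x_i\geq 0$ pointwise after taking the expectation over $\bm{H}$; this gives the monotone property. Discrete submodularity is exactly the statement that the second-order difference displayed above is nonpositive (the diminishing-returns inequality $f(\hat x_i{=}1,\hat x_j{=}1)-f(\hat x_i{=}1,\hat x_j{=}0)\le f(\hat x_i{=}0,\hat x_j{=}1)-f(\hat x_i{=}0,\hat x_j{=}0)$), so $\partial^2 F/\partial x_i\partial x_j\leq 0$ after averaging, yielding the submodular property. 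Since expectation and finite summation preserve these sign conditions, the properties lift from each $C_{\bm{H}}$ to $F$.

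The main obstacle I anticipate is not any single inequality but making the differentiation-under-expectation rigorous and cleanly connecting the \emph{continuous} derivative conditions of the definition to the \emph{discrete} submodularity/monotonicity supplied by Lemma \ref{lemma1}. I would handle this by treating $F$ coordinatewise: because $F$ is affine in each $x_i$ separately, the partial derivatives are honest finite differences of $f_{\bm{H}}$ rather than genuine limits, which sidesteps any interchange-of-limit concern; the expectation over $\bm{H}\sim\mathcal{D}$ then only needs linearity of $\mathbb{E}[\cdot]$ and the fact that a nonnegative (resp.\ nonpositive) integrand has nonnegative (resp.\ nonpositive) expectation. With the derivative formulas in hand and Lemma \ref{lemma1} applied realization-by-realization, all three bullet points of the definition follow, completing the proof.
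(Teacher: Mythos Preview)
Your proposal is correct and is precisely the standard argument that the multilinear extension of a monotone submodular set function is smooth monotone submodular in the continuous sense; the paper itself does not supply a proof of this lemma but simply attributes it to the literature (the citation after the statement), so your argument in fact fills in what the paper omits. The key steps you identify---multilinearity giving smoothness and $\partial^2 F/\partial x_i^2=0$, the first partials equalling expected discrete marginals, the mixed second partials equalling expected second-order differences, and then invoking Lemma~\ref{lemma1} realization-by-realization together with linearity and sign-preservation of expectation---are exactly the ingredients used in the references the paper cites (e.g., Calinescu et al.\ and Vondr\'ak).
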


To formulate a continuous function problem, we further need to transform the cardinality constraint on $\mathcal{S}$ to the matroid constraint on $\bm{x}$. We thus define a matroid $\mathcal{M}=(\mathcal{L}, \mathcal{S})$ with constraint $R_{\mathcal{M}}(\mathcal{S})=N_S$, where $\mathcal{L}$ is the set of all antennas and $\mathcal{S}$ denotes the selected antennas set. Therefore, problem $\mathscr{L}$2 can be equivalently rewritten as the following continuous submodular monotone maximization problem
\begin{eqnarray}\label{p6}
\mathscr{L}4:\ \mathop{\text{maximize}} && F(\bm{x})\\
\textrm{subject to} &&\label{p6-1} \bm{x}\in P(\mathcal{M}),
\end{eqnarray}
where (\ref{p6-1}) is the equivalent matroid polytope constraint. This kind of stochastic submodular maximization problem can be efficiently solved via the continuous greedy algorithm, which was first proposed in \cite{2008OptimalSTOC} for solving the submodular welfare problem and has been further discovered more recently.

To solve the above problem, We first introduce a continuous greedy algorithm to solve problem problem $\mathscr{L}4$ with a smooth monotone submodular objective function $F(\bm{x})$ and a matroid polytope constraint. The philosophy of the proposed algorithm is to iteratively move along the direction of a vector constrained by $P(\mathcal{M})$ that can maximize the local gain. Therefore, it produces a $(1-1/e)$ approximated fractional solution $\bm{x}^{*}\in P(\mathcal{M})$ to problem $\mathscr{L}4$ \cite{2008OptimalSTOC}. To be specific, it starts with the particular initial point $\bm{x}(0)=\bm{0}$ and then iteratively update vector $\bm{x}(t+\delta)$ based on the direction of the following vector $\bm{v}(\bm{x}(t))$
\begin{align}
\label{independent}
\bm{v}(\bm{x}(t))=\arg \max_{\bm{v}\in P(\mathcal{M})}(\bm{v}\cdot \nabla F(\bm{x}(t))),
\end{align}
where $\nabla F(\bm{x}(t))$ can be estimated by the random sampling method presented in Section \ref{section4}-B-3). Moreover, we further observe that problem (\ref{independent}) is a linear optimization problem over $P(\mathcal{M})$. We thus can obtain $\bm{v}(\bm{x}(t))$ via finding a maximum-weight independent set in matroid, which can be easily solved. Then, the update rule of vector $\bm{x}(t+\delta)$ is given by
\begin{align}
\bm{x}(t+\delta)=\bm{x}(t)+\delta \cdot \bm{1}_{I(t)},
\end{align}
where $t\in [0,1]$ is the finite index of iterations, $\delta$ denotes the step size and  $I(t)$ is the maximum-weight independent set with respect to problem (\ref{independent}). Intuitively, the trajectory for $\{\bm{x}(t)\}_{t=0}^1$ can be regarded as a convex linear combination of vectors $\{\bm{v}(\bm{x}(t))\}_{t=0}^1$, by which we can imply the theoretical guarantee. For the ease of presentation, we have to omit more algorithmic details and summarize the continuous greedy algorithm in Alg. ${\ref{algo3}}$. Note that the solution $\bm{x}(1)$ yielded by Alg. ${\ref{algo3}}$ is fractional, for which the \textit{pipage rounding} procedure presented in Section \ref{section4}-B-3) needs to be further adopted. 

\begin{algorithm}\label{algo3}
        \caption{Continuous Greedy Algorithm for Problem $\mathscr{L}4$.}
        \SetKwData{Left}{left}\SetKwData{This}{this}\SetKwData{Up}{up}
        \SetKwInOut{Input}{Input}\SetKwInOut{Output}{output}
        \Input{Rank $N_{S}:=R_{\mathcal{M}}(\mathcal{L})$ and step size $\delta=\frac{1}{9N_{S}^2}$.}
        {\textbf{Initialization :}} $t=0$, $\bm{x}(0)=\bm{0}$.\\
        \While{$t<1$}{
                Obtain $\bm{\phi}(t)$ via $\frac{10}{\delta^2}(1+\text{ln}(L))$ independent samples, where $\mathbb{E}[\bm{\phi}(t)|\bm{x}(t)]=\nabla F(\bm{x}(t))$;\\
                $I(t)\leftarrow\arg \max_{\bm{v}\in P(\mathcal{M})}(\bm{v}\cdot \bm{\phi}(t))$;\\
                $\bm{x}(t+\delta)\leftarrow\bm{x}(t)+\delta \cdot \bm{1}_{I(t)}$;\\
                $t\leftarrow t+\delta$;\\
                
        }
        \Output {$\bm{x}(1)$.}
\end{algorithm}

Although the continuous greedy algorithm can solve problem $\mathscr{L}4$ with a $(1-1/e)$ approximated solution, it is necessary to start with a specific initial vector $\bm{0}$ \cite{2011Maximizing}. Moreover, it needs a huge fixed batch samples to estimate gradient $\nabla F(\bm{x}(t))$, yielding the high iteration cost.
Hence, we will present a scalable stochastic projected gradient method in Section \ref{section4}-B-3).

\subsubsection{Stochastic Projected Gradient Method}

In this subsection, we shall employ the stochastic projected gradient method (SPGM) to solve the reformulated problem $\mathscr{L}4$ with the strong approximation guarantee to the global maxima \cite{2017GradientMethods}. Basically, the stochastic projected gradient method is also a greedy algorithm, which iteratively updates the decision vector $\bm{x}$ based on the estimated gradients of $F(\bm{x})$ instead of maximal marginal gains. To be specific, it starts from an arbitrary initial estimate $\bm{x}^1 \in P(\mathcal{M})$. Then, the iterative update rule is given as 
\begin{align}\label{updaterule}
\bm{x}^{t+1}=\mathcal{P}_{P(\mathcal{M})}(\bm{x}^{t}+\mu_t \nabla F(\bm{x}^t)),
\end{align}
where $\mathcal{P}_{P(\mathcal{M})}(\bm{x})$ denotes the Euclidean projection of $\bm{x}$ onto the set $P(\mathcal{M})$ and $\mu_t$ is the step size.

Unfortunately, it is difficult to evaluate $\nabla F(\bm{x})$ without any knowledge of the underlying distribution, for which we further propose to utilize the stochastic unbiased estimate $\bm{\phi}$ of the gradient according to the collected historical realizations. Thus, the iteratively update rule (\ref{updaterule}) can be rewritten as $\bm{x}^{t+1}=\mathcal{P}_{P(\mathcal{M})}(\bm{x}^{t}+\mu_t \bm{\phi}_t)$,
where $\bm{\phi}_t$ is the unbiased estimate obtained via random sampling of historical channel realizations, following the rule $\mathbb{E}[\bm{\phi}_t|\bm{x}^t]=\nabla F(\bm{x}^t)$.
We then summarize the execution of the proposed SPGM for problem $\mathscr{L}4$ in Alg. ${\ref{algo4}}$.
\begin{algorithm}\label{algo4}
        \label{SGM}
        \caption{Stochastic Projected Gradient Method for Problem $\mathscr{L}4$.}
        \SetKwData{Left}{left}\SetKwData{This}{this}\SetKwData{Up}{up}
        \SetKwInOut{Input}{Input}\SetKwInOut{Output}{output}
        \Input{Integer $T> 0$ and scalars $\mu_t> 0,\ t\in[T]$.}
        {\textbf{Initialization :}} $\bm{x}^1\in \mathcal{P}_{P(\mathcal{M})}$.\\
        \For{$t= 1,2,\ldots,T$}{
                $\bm{u}^{t+1}\leftarrow \bm{x}^t+\mu_t\bm{\phi}_t$, where $\mathbb{E}[\bm{\phi}_t|\bm{x}^t]=\nabla F(\bm{x}^t)$;\\
                $\bm{x}^{t+1}\leftarrow\arg\min_{x\in P(\mathcal{M})}\|\bm{x}-\bm{u}^{t+1}\|_2$;
        }
        Select $\tau$ uniformly at random from ${1,2,...,T}$;\\
        \Output {$\bm{x}^{\tau}$.}
\end{algorithm}
\begin{lemma}
        \label{SGMT}
        Let $\bm{\phi}_t$ denote an unbiased estimate satisfying $\mathbb{E}[\bm{\phi}_t]=\nabla F(\bm{x}^t)$ and  $\mathbb{E}[\|\bm{w}_t-\nabla F(\bm{x}^t)\|_{\ell_2}^2]\leq \delta^2$. Assume that the function $F(\bm{x})$ is $L$-smooth, which means that $\|\nabla F(\bm{x})-\nabla F(\bm{y})\|_{\ell_2}\leq L\|\bm{x}-\bm{y}\|_{\ell_2}$. By executing selecting Alg. ${\ref{algo4}}$ with step size $\mu_t=\frac{1}{L+\frac{\delta}{R}\sqrt{t}}$, and randomly select $\tau$ from $\{2,...,T-1\}$ or from $\{1, T\}$ with probability $\frac{1}{T-1}$ or $\frac{1}{2(T-1)}$ respectively, we can guarantee
        \begin{eqnarray}
        \label{sgmbound}
        \mathbb{E}[F(\bm{x}^{\tau})]\geq \frac{OPT}{2}-\epsilon,
        \end{eqnarray}
        where $T=O(\frac{R^2L}{\epsilon}+\frac{R^2\delta^2}{\epsilon^2})$ is the number of iterations, $\epsilon=\left(\frac{R^2L}{2T}+\frac{R\delta}{\sqrt{T}}\right)$, $R^2=\sup_{\bm{x},\bm{y}\in\mathcal{P}}\frac{1}{2}\|\bm{x}-\bm{y}\|_{\ell_2}^2$ is the diameter of bounded convex set $P(\mathcal{M})$ and $OPT$ is the optimal value for the formulated antenna selection subproblem $\mathscr{L}$2.
\end{lemma}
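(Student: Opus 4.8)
The plan is to combine the geometry of monotone continuous submodular functions with a standard stochastic projected-gradient analysis, following the framework of \cite{2017GradientMethods}. The engine of the argument is a first-order inequality that quantifies how well the gradient direction points toward the optimum. Let $\bm{x}^{*}$ denote the maximizer of $F$ over $P(\mathcal{M})$, so $OPT=F(\bm{x}^{*})$. Since $F$ is smooth monotone submodular (as established above), it is concave along every non-negative direction. Decomposing $\bm{x}^{*}-\bm{x}$ into its non-negative part $(\bm{x}^{*}\vee\bm{x})-\bm{x}$ and non-positive part $\bm{x}-(\bm{x}^{*}\wedge\bm{x})$ and applying this directional concavity to each piece yields, for every feasible $\bm{x}$,
\begin{eqnarray}
\langle \nabla F(\bm{x}),\,\bm{x}^{*}-\bm{x}\rangle \;\geq\; F(\bm{x}^{*}\vee\bm{x})+F(\bm{x}^{*}\wedge\bm{x})-2F(\bm{x}) \;\geq\; OPT-2F(\bm{x}),
\end{eqnarray}
where the last step uses monotonicity ($F(\bm{x}^{*}\vee\bm{x})\geq F(\bm{x}^{*})$) and non-negativity of the capacity ($F(\bm{x}^{*}\wedge\bm{x})\geq 0$). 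This ``$OPT-2F$'' lower bound is precisely what forces the factor $1/2$ in the final guarantee, and establishing it rigorously in the continuous domain is the conceptual heart of the proof.

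Next I would run the standard one-step projected-gradient bound. Using that the Euclidean projection onto the convex body $P(\mathcal{M})$ is non-expansive and that $\bm{x}^{*}\in P(\mathcal{M})$, I expand
\begin{eqnarray}
\|\bm{x}^{t+1}-\bm{x}^{*}\|_{\ell_2}^{2} \;\leq\; \|\bm{x}^{t}-\bm{x}^{*}\|_{\ell_2}^{2}+2\mu_t\langle \bm{\phi}_t,\,\bm{x}^{t}-\bm{x}^{*}\rangle+\mu_t^{2}\|\bm{\phi}_t\|_{\ell_2}^{2}.
\end{eqnarray}
Taking the conditional expectation given $\bm{x}^{t}$ and invoking unbiasedness $\mathbb{E}[\bm{\phi}_t\mid\bm{x}^{t}]=\nabla F(\bm{x}^{t})$ replaces $\bm{\phi}_t$ by $\nabla F(\bm{x}^{t})$ in the inner product, so that the structural inequality of the first step can be substituted in. Meanwhile the bounded-variance hypothesis $\mathbb{E}[\|\bm{\phi}_t-\nabla F(\bm{x}^{t})\|_{\ell_2}^{2}]\leq\delta^{2}$, together with the $L$-smoothness of $F$, controls the second-order term $\mu_t^{2}\mathbb{E}\|\bm{\phi}_t\|_{\ell_2}^{2}$, splitting it into a deterministic contribution governed by $L$ and a stochastic contribution governed by $\delta^{2}$.

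Combining these two steps and rearranging gives, after taking full expectations, a per-iteration estimate of the form $\mu_t\big(OPT-2\mathbb{E}[F(\bm{x}^{t})]\big)\leq \tfrac{1}{2}\mathbb{E}[\,\|\bm{x}^{t}-\bm{x}^{*}\|_{\ell_2}^{2}-\|\bm{x}^{t+1}-\bm{x}^{*}\|_{\ell_2}^{2}\,]+\tfrac{\mu_t^{2}}{2}(L\text{- and }\delta\text{-terms})$. Summing over $t=1,\ldots,T$ makes the distance terms telescope; the non-uniform sampling of $\tau$ -- weight $\tfrac{1}{T-1}$ on the interior indices and $\tfrac{1}{2(T-1)}$ on the two endpoints -- is exactly the weighted averaging that normalizes the step-size-weighted sum and collapses the telescoped distances to the diameter $R^{2}=\sup_{\bm{x},\bm{y}\in\mathcal{P}}\tfrac12\|\bm{x}-\bm{y}\|_{\ell_2}^{2}$, so that $\mathbb{E}[F(\bm{x}^{\tau})]$ equals the corresponding weighted average of the $\mathbb{E}[F(\bm{x}^{t})]$. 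Finally, inserting the balanced step size $\mu_t=1/(L+\tfrac{\delta}{R}\sqrt{t})$ separates the residual error into the smoothness term $\tfrac{R^{2}L}{2T}$ and the noise term $\tfrac{R\delta}{\sqrt{T}}$, delivering $\mathbb{E}[F(\bm{x}^{\tau})]\geq \tfrac{OPT}{2}-\epsilon$ with $\epsilon=\tfrac{R^{2}L}{2T}+\tfrac{R\delta}{\sqrt{T}}$ and the stated iteration count $T=O(R^{2}L/\epsilon+R^{2}\delta^{2}/\epsilon^{2})$. I expect the main obstacle to be twofold: first, proving the structural inequality cleanly in the continuous setting, since this is where submodularity (beyond mere concavity) enters and where the factor $2$ -- hence the $1/2$-approximation -- originates; and second, the careful bookkeeping of the step-size-weighted telescoping in the stochastic regime, so that $L$ and $\delta^{2}$ decouple exactly into the two terms of $\epsilon$ and the weighted choice of $\tau$ normalizes the averaging without leaving stray boundary terms.
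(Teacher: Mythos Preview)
Your proposal is correct and reconstructs exactly the argument the paper defers to: the paper itself does not prove this lemma but simply cites \cite{2017GradientMethods}, and your sketch---the structural inequality $\langle\nabla F(\bm{x}),\bm{x}^{*}-\bm{x}\rangle\geq OPT-2F(\bm{x})$ from monotone DR-submodularity, followed by the standard one-step projected-gradient bound, telescoping, and the step-size balancing that yields $\epsilon=\tfrac{R^{2}L}{2T}+\tfrac{R\delta}{\sqrt{T}}$---is precisely the proof given there.
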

\begin{proof}
        The proof of Lemma \ref{SGMT} can be found in \cite{2017GradientMethods}.
\end{proof}
To implement Alg. \ref{algo4}, we need to consider the following issues:
\begin{itemize}
        \item \textbf{Estimating}. To obtain an unbiased estimator of the gradient $\nabla F(\bm{x})$ for the proposed stochastic projected gradient method, we can sample an antenna set $\mathcal{S}$ by selecting each antenna with probability $x_i$. Then, we can get an estimate of the $i$-th partial derivate $\phi_i$ via $C_{\bm{H}}(\mathcal{S}\cup \{i\})-C_{\bm{H}}(\mathcal{S}\setminus \{i\})$, where the channel matrix $\bm{H}$ is randomly sampled from a distribution $\mathcal{D}$. We repeat the above procedure $B$ times and then take the average \cite{2011Maximizing}. However, its computational cost is still huge due to the computations of the large-scale log-determinant, for which we shall propose a faster method to avoid redundant computations in Section \ref{section4}-B-4).
        
        \item \textbf{Pipage Rounding}. Since the solution yielded by SPGM is still fractional in continuous domain, we need further incorporate a pipage rounding procedure to obtain the discrete solution of selected antennas subset $\mathcal{S}$ \cite{2004Pipage}. Moreover, as the matroid constraint (\ref{p6-1}) in SPGM is uniform matroid, we can employ the randomized pipage rounding algorithm \cite{2017Stochastic}. 
\end{itemize}

\subsubsection{Proposed Speeding up Gradient Estimating}
In  this subsection, we propose a method to speed up the gradient estimation. As mentioned before, the estimation of gradient $\nabla F(\bm{x})$ is one of the most frequent operations for all gradient-based submodular maximization algorithms. Moreover, estimating gradient $\nabla F(\bm{x})$ is also a vital important procedure in the stochastic projected gradient method \cite{2019YuStochastic}. Therefore, we propose a low complexity approach for gradient estimation to reduce the computational costs dramatically.

We first analyze the time complexity for the estimation of gradient $\nabla F(\bm{x})$. Note that for each $\bm{x}^t$, we adopt $C_{\bm{H}}(\mathcal{S}\cup \{i\},\bm{\Theta})-C_{\bm{H}}(\mathcal{S}\setminus \{i\},\bm{\Theta})$ as the unbiased estimator of the $i$-th partial derivative $\phi_i$, where the antennas subset $\mathcal{S}$ is sampled from $\mathcal{L}$ based on $\bm{x}^t$. Moreover, as it needs {$\mathcal{O}(N_{S}^3)$} to evaluate $\text{det}(\bm{I}+\text{snr}\bm{H}(\mathcal{S},\bm{\Theta})^{\sf H}\bm{H}(\mathcal{S},\bm{\Theta}))$, the time complexity of estimating $\nabla F(\bm{x})$ is {$\mathcal{O}(LN_{S}^3)$}. However, the time complexity of our proposed speeding up gradient estimating method is only {$\mathcal{O}(K^3+LK^2)$}, where $K$ denotes the number of single-antenna users and $L$ is the total number of antennas.

Specifically, according to Sylvester's Determinant theorem \cite{2003Combinatorial}, the downlink sum capacity can be further rewritten as
\begin{align}
C_{\bm{H}}(\mathcal{S},\bm{\Theta})=\text{log}_2\text{det}(\bm{I}_{K}+\text{snr}\bm{H}(\mathcal{S},\bm{\Theta})\bm{H}(\mathcal{S},\bm{\Theta})^{\sf H}).
\end{align}
We can directly reduce the time complexity of evaluating the determinant to $\mathcal{O}(N_{S}^3)$ based on the above equation, which has also been exploited in \cite{2018aSimple}. But we can further reduce the computational complexity by avoiding computing the determinant. To be specific, we can equivalently express $\bm{H}(\mathcal{S}\cup \{i\},\bm{\Theta})\bm{H}(\mathcal{S}\cup \{i\},\bm{\Theta})^{\sf H}$ as 
\begin{align}
\bm{H}(\mathcal{S}\setminus \{i\},\bm{\Theta})\bm{H}(\mathcal{S}\setminus \{i\},\bm{\Theta})^{\sf H}+{\bm{u}_{i}\bm{u}_{i}^{\sf H}},
\end{align}
{where $\bm{u}_{i}$ denotes the $i$-th column of the channel matrix from the base station to users}. We further define
\begin{align}
[\bm{G}(\mathcal{S})]_i=\bm{I}_{K}+\text{snr}\bm{H}(\mathcal{S}\setminus \{i\},\bm{\Theta})\bm{H}(\mathcal{S}\setminus \{i\},\bm{\Theta})^{\sf H}.
\end{align}
Based on the above definitions, we then propose an efficient approach to estimate gradient $\bm{\phi}$ via Lemma \ref{lemma:efficient}.
\begin{lemma}\label{lemma:efficient}
        Let $\phi_i$ be the $i$-th entry of the gradient $\bm{\phi}$, and $\phi_i=C_{\bm{H}}(\mathcal{S}\cup \{i\},\bm{\Theta})-C_{\bm{H}}(\mathcal{S}\setminus \{i\},\bm{\Theta})$ denote the unbiased estimator of gradient. We can obtain unbiased estimators by
        \begin{align}
        \label{abc}
        \phi_i=\text{\rm{log}}_2(1+{\sf{snr}}\bm{u}_{i}^{\sf H}[\bm{G}(\mathcal{S})]_{i-1}^{-1}\bm{u}_{i}).
        \end{align}
\end{lemma}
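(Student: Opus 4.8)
The plan is to collapse the difference of two log-determinants into a single scalar quantity by exploiting the rank-one relationship between the Gram matrices associated with $\mathcal{S}\cup\{i\}$ and $\mathcal{S}\setminus\{i\}$. First I would work with the Sylvester-identity form of the capacity established immediately above the lemma, namely $C_{\bm{H}}(\mathcal{S},\bm{\Theta})=\log_2\det(\bm{I}_{K}+{\sf{snr}}\,\bm{H}(\mathcal{S},\bm{\Theta})\bm{H}(\mathcal{S},\bm{\Theta})^{\sf H})$, so that every determinant lives in the fixed $K\times K$ space rather than the $N_S$-dimensional antenna space. This is precisely the reduction that makes the eventual complexity scale with $K$ instead of $N_S$.

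Second, I would make precise the rank-one update already noted in the text: since $\bm{H}(\mathcal{S}\cup\{i\},\bm{\Theta})$ differs from $\bm{H}(\mathcal{S}\setminus\{i\},\bm{\Theta})$ only by appending the single effective channel column $\bm{u}_i$, the products satisfy $\bm{H}(\mathcal{S}\cup\{i\},\bm{\Theta})\bm{H}(\mathcal{S}\cup\{i\},\bm{\Theta})^{\sf H}=\bm{H}(\mathcal{S}\setminus\{i\},\bm{\Theta})\bm{H}(\mathcal{S}\setminus\{i\},\bm{\Theta})^{\sf H}+\bm{u}_i\bm{u}_i^{\sf H}$. Writing $\phi_i$ as the difference of the two log-determinants and invoking the definition $[\bm{G}(\mathcal{S})]_i=\bm{I}_{K}+{\sf{snr}}\,\bm{H}(\mathcal{S}\setminus\{i\},\bm{\Theta})\bm{H}(\mathcal{S}\setminus\{i\},\bm{\Theta})^{\sf H}$, the quantity becomes the logarithm of the ratio $\det([\bm{G}(\mathcal{S})]_i+{\sf{snr}}\,\bm{u}_i\bm{u}_i^{\sf H})\,/\,\det([\bm{G}(\mathcal{S})]_i)$.

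The decisive step is to apply the matrix determinant lemma for a rank-one perturbation, $\det(\bm{A}+\bm{v}\bm{v}^{\sf H})=\det(\bm{A})(1+\bm{v}^{\sf H}\bm{A}^{-1}\bm{v})$, with $\bm{A}=[\bm{G}(\mathcal{S})]_i$ and $\bm{v}=\sqrt{{\sf{snr}}}\,\bm{u}_i$. Because $[\bm{G}(\mathcal{S})]_i$ equals $\bm{I}_K$ plus a positive semidefinite term, it is Hermitian positive definite and hence invertible, so the lemma applies without qualification. This collapses the determinant ratio to $1+{\sf{snr}}\,\bm{u}_i^{\sf H}[\bm{G}(\mathcal{S})]_i^{-1}\bm{u}_i$ and yields $\phi_i=\log_2(1+{\sf{snr}}\,\bm{u}_i^{\sf H}[\bm{G}(\mathcal{S})]_i^{-1}\bm{u}_i)$, the claimed expression. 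The unbiasedness of $\phi_i$ as an estimator of the $i$-th partial derivative is inherited from the sampling construction described just before the lemma and requires no separate argument here.

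I expect the only genuine subtlety to be bookkeeping rather than analysis. One must verify that $\bm{u}_i$ is exactly the column of the \emph{effective} channel $\bm{H}(\mathcal{S},\bm{\Theta})=\bm{\hat{H}}(\mathcal{S})+\bm{R}\bm{\Theta}\bm{T}(\mathcal{S})$ contributed by antenna $i$ — that is, that the RIS reflection is already folded into $\bm{u}_i$ — and that the $K\times K$ matrix in the denominator matches the indexing used in the gradient-estimation routine, reconciling the $i{-}1$ subscript appearing in the statement against the $i$ subscript in the definition of $[\bm{G}(\mathcal{S})]_i$. Once the column identification and the rank-one structure are fixed, the matrix determinant lemma does all the remaining work, and the stated complexity $\mathcal{O}(K^3+LK^2)$ then follows by computing a single inverse of $[\bm{G}(\mathcal{S})]_i$ and reusing it across all $L$ candidate partial derivatives.
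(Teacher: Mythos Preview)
Your proposal is correct and follows essentially the same route as the paper: rewrite the two capacities in the $K\times K$ Sylvester form, recognize the rank-one relation $\bm{H}(\mathcal{S}\cup\{i\})\bm{H}(\mathcal{S}\cup\{i\})^{\sf H}=\bm{H}(\mathcal{S}\setminus\{i\})\bm{H}(\mathcal{S}\setminus\{i\})^{\sf H}+\bm{u}_i\bm{u}_i^{\sf H}$, and apply the matrix determinant lemma to collapse the log-determinant ratio to $\log_2(1+{\sf snr}\,\bm{u}_i^{\sf H}[\bm{G}(\mathcal{S})]^{-1}\bm{u}_i)$. Your flagged ``bookkeeping'' concern about the subscript $i{-}1$ versus $i$ is well taken: with the paper's own definition $[\bm{G}(\mathcal{S})]_i=\bm{I}_K+{\sf snr}\,\bm{H}(\mathcal{S}\setminus\{i\})\bm{H}(\mathcal{S}\setminus\{i\})^{\sf H}$, the natural index in the final formula is $i$, and the $i{-}1$ appearing in the statement and proof reflects an overloaded sequential-update convention tied to the subsequent Sherman--Morrison recursion rather than the static definition.
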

\begin{proof}
        We can rewrite $C_{\bm{H}}(\mathcal{S}\cup \{i\},\bm{\Theta})$ as follows
        \begin{eqnarray}
        C_{\bm{H}}(\mathcal{S}\cup \{i\},\bm{\Theta})=\text{log}_2\text{det}\big([\bm{G}(\mathcal{S})]_{i-1}+\text{snr}\bm{u}_{i}\bm{u}_{i}^{\sf H}\big)\ \ \ \ \ \ \ \  \ \ \nonumber\\
        =\text{log}_{2}\text{det}\big([\bm{G}(\mathcal{S})]_{i-1}\big)+\text{log}_2\big(1+\text{snr}\bm{u}_{i}^{\sf H}[\bm{G}(\mathcal{S})]_{i-1}^{-1}\bm{u}_{i}\big).\nonumber
        \end{eqnarray}
        \text{Since $\text{log}_2\text{det}\big([\bm{G}(\mathcal{S})]_{i-1}\big)=C_{\bm{H}}(\mathcal{S}\setminus \{i\},\bm{\Theta})$}, the equality of (\ref{abc}) holds.
\end{proof}
Even if we can avoid the computation of the matrix determinant by employing Lemma \ref{lemma:efficient}, the standard inversion of $[G(\mathcal{S})]_i$ also requires $\mathcal{O}(K^3)$ operations. To tackle this problem, we further incorporate the Sherman-Morrison Formula \cite{1990MatrixAnalysis}, by which we only need compute the inversion of $[G(\mathcal{S})]$ once and obtain $[G(\mathcal{S})]_i$ from $[G(\mathcal{S})]_{i-1}$ based on the following equation
\begin{eqnarray}
[G(\mathcal{S})]_i^{-1}=[G(\mathcal{S})]^{-1}_{i-1}-\frac{{\sf{snr}}[G(\mathcal{S})]^{-1}_{i-1}\bm{u}_i\bm{u}_i^{\sf{H}}[G(\mathcal{S})]_{i-1}^{-1}}{1+{\sf{snr}}\bm{u}_i^{\sf{H}}[G(\mathcal{S})]_{i-1}^{-1}\bm{u}_i}.\ \ \ \
\end{eqnarray}
In each estimation, we only need compute $[G(\mathcal{S})]^{-1}$ once in $\mathcal{O}(K^3)$ and iteratively obtain each $[G(\mathcal{S})]_i^{-1}$ in $\mathcal{O}(K^2)$. Therefore, the gradient $\phi$ can be obtained in $\mathcal{O}(LK^2)$ operations.

\subsection{Passive Beamforming Subproblem}
In this paper, we propose to solve problem $\mathscr{L}3$ only based on the collected channel realizations via directly optimizing the following empirical objective function
\begin{eqnarray} \label{L3E}
\mathop{\text{maximize}}_{\bm{\Theta}} && \frac{1}{s}\sum_{i=1}^sC_{\bm{H}}(\mathcal{S},\bm{\Theta})\nonumber\\
\textrm{subject to} && |\beta_n|=1, n=1,\ldots,N, 
\end{eqnarray}
where $s$ is the number of historical channel realizations. However, it is still non-convex due to the non-concave objective function over the phase-shift $\bm{\Theta}$ and its uni-modular constraints on $\beta_{n}$. Note that the proposed design in Section \ref{3c} is not applicable to solve problem (\ref{L3E}) due to the sum form of capacity expressions. We thus propose a novel iterative optimization algorithm via convex realization and the projection.

We propose to iteratively update each variable $\beta_{n}$ with other variables being fixed based on the principle of  block coordinate descent. To be specific, we first formulate a non-convex subproblem of optimizing $\beta_{n}$ with given $\mathcal{S}^*$ and $\{\beta_{j},j\neq n\}_{j=1}^N$. For the consistent of presentation, we continue to employ the same notations used in Section \ref{3c}. Then, we can obtain the following subproblem
\begin{eqnarray}\label{IRCUP}
\mathop{\text{maximize}}_{\beta_n}\!&&\! \frac{1}{s}\sum_{i=1}^s\!\text{log}_2\text{det}\big(\bm{P}_n^i(\mathcal{S}^*)\!+\!\beta_n\bm{Q}_n^i(\mathcal{S}^*)\!+\!\beta_n^\natural\bm{Q}_n^i(\mathcal{S}^*)^{\sf H}\big)\nonumber\\
\textrm{subject to}\!&&\! \label{IRCUPsubject}|\beta_n|=1, n=1,2,\ldots,N,
\end{eqnarray}
where $\bm{P}_n^i$ and $\bm{Q}_n^i$ can be regarded as the instances of $\bm{P}_n$ and $\bm{Q}_n$ with respect to the $i$-th sampled channel realization. Although the objective function is concave over $\beta_{n}$, it is still non-convex due to the uni-modular constraint. Note that the proposed solution in Section \ref{3c} cannot solve this problem due to the sum form of capacity expressions. Hence, the optimal solution is difficult to obtain. To overcome this drawback, we shall first solve a relaxed convex problem by assuming $|\beta_{n}|\leq 1$, and then projecting the solution $\beta_{n}^*$ to the feasible set. Formally, we can obtain the following relaxed convex problem
\begin{eqnarray}\label{L4En}
\mathop{\text{maximize}}_{\beta_n}\!&& \!\frac{1}{s}\sum_{i=1}^s\!\text{log}_2\text{det}\big(\bm{P}_n^i(\mathcal{S}^*)\!+\!\beta_n\bm{Q}_n^i(\mathcal{S}^*)\!+\!\beta_n^\natural\bm{Q}_n^i(\mathcal{S}^*)^{\sf H}\big)\nonumber\\
\textrm{subject to}&& \label{L4Ensubject}|\beta_n|\leq 1, n=1,2,\ldots,N.
\end{eqnarray}
Note that problem (\ref{L4En}) is convex and can be efficiently solved by CVX \cite{2014CVX}. Define $\mathcal{F}=\{\beta_{n}\big||\beta_{n}|=1\}$ as the feasible set, we then have the following feasible solution: $\beta_{n}^\circ=\text{Pj}_{\mathcal{F}}(\beta_{n}^*)$, where $\beta_{n}^*$ is the optimal solution of problem (\ref{L4En}) and $\text{Pj}_{\mathcal{F}}(\cdot)$ indicates the projection operation onto $\mathcal{F}$. We then summarize the execution of proposed algorithm in Alg. \ref{alg5}.
\begin{algorithm}
        \label{alg5}
        \SetKwData{Left}{left}\SetKwData{This}{this}\SetKwData{Up}{up}
        \SetKwInOut{Input}{Input}\SetKwInOut{Output}{output}
        \Input{$\{\bm{\hat{H}}_i,\bm{R}_i,\bm{T}_i\}_{i=1}^s$, $\mathcal{S}^{*}$.}
        Randomly generate $\{\beta_{n}\}_{n=1}^N$ with $|\beta_{n}|=1$.\\
        \For{$n =1,2,\ldots,N$}{
                Solve problem (\ref{L4En}) and obtain a solution $\beta_{n}^*$;\\
        }
        If not convergence, go to Step 2;\\
        ${\beta}_{n}^\circ=\text{Pj}_{\mathcal{F}}(\beta_{n}^*),\ n=1,2,\ldots,N$;\\
        \Output{ $\bm{\Theta}^\circ=\text{diag}(\beta_1^\circ,\ldots,\beta_N^\circ)$}
        \caption{Proposed Algorithm for Problem  $\mathscr{L}3$.}
\end{algorithm}

Since $\beta_{n}^*$ is the optimal solution for problem (\ref{L4En}),  we can find a local optimum of problem (\ref{L3E}) with a relaxed constraint $|\beta_{n}|\leq 1$. Note that $\bm{\Theta}^\circ$ obtained by projection is not a local optimal solution of problem (\ref{L3E}). However, it has been shown that the performance of the projection solution still highly depends on the solution of original problem \cite{huang2018energy}. Thus, our proposed iterative algorithm can still achieve good performances after the projection.

\section{Simulation Results}\label{simulation}
In this section, we provide the simulation results of the proposed algorithms for joint antenna selection and passive beamforming in RIS-assisted massive MIMO communication systems. 
We assume that the BS equipped with a uniform linear array of antennas with antenna separation $d_A=\lambda/2$ ($\lambda$ is the wavelength) is located at altitude $h_{\text{BS}}$ meter (m) and the RIS with a uniform planar array is located at altitude $h_{\text{RIS}}$ meter (m). Thus, the locations of the BS and the RIS are set as $(0,0,10)$ and $(50,50,15)$, respectively. Moreover, the users are randomly located in the region of $(200,\pm 50,0)\times (300,\pm 50,0)$ meters. We illustrate the locations of the BS, RIS, and users' horizontal projections in Fig. \ref{x-y-model}. 
\begin{figure}[t]
       \centering
       \includegraphics[scale = 0.25]{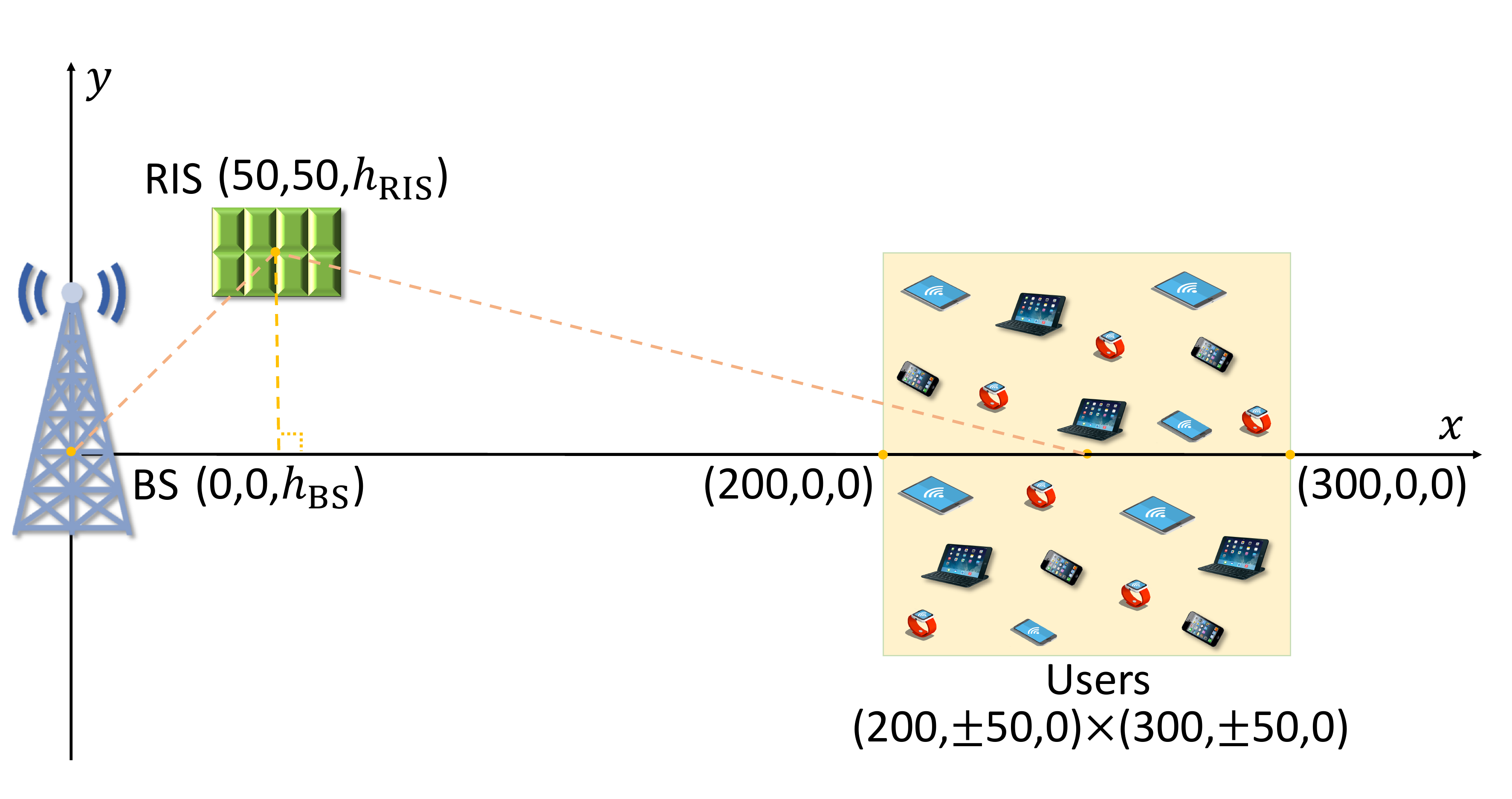}
       \caption{Horizontal locations of the BS, RIS, and users.}
       \label{x-y-model}
\end{figure}

We further consider the following path loss model
\begin{eqnarray}
L(d)=T_0(d/d_0)^{-\alpha},
\end{eqnarray}
where $T_0=-30$ dB denotes the path loss with respect to reference distance $d_0=1$ meter, $d$ represents the link distance, and $\alpha$ is the path loss exponent. The path loss exponents for the BS-user link, the BS-RIS link, and the RIS-user link are set as 3.5, 2.2, and 2.8, respectively \cite{2019zhangCapacity}. We assume that the noise power spectrum density is $-169$ dBm
/Hz with additional 9 dB noise figure, and the system bandwidth is 10MHz, yielding $\sigma^2=-90$ dBm for narrowband MIMO systems. To account for the small-scale fading, we assume that all channels suffer from Rician fading \cite{2019zhangCapacity}, i.e., $\bm{\hat{H}}, \bm{R}$, and $\bm{T}$. To be specific, the Rician fading channel can be expressed as 
\begin{eqnarray}
\bm{\tilde{H}}=\sqrt{\frac{1}{\kappa+1}}\bm{\tilde{H}}^{\text{NLoS}}+\sqrt{\frac{\kappa}{\kappa+1}}\bm{\tilde{H}}^{\text{LoS}},
\end{eqnarray}
where $\kappa \in [0,\infty)$ is the Rician factor, $\bm{\tilde{H}}^{\text{NLoS}}$ denotes the non-LoS (NLoS) component, and $\bm{\tilde{H}}^{\text{LoS}}$ denotes the deterministic line of sight (LoS) component. The LoS component $\bm{\tilde{H}}^{\text{LoS}}$ is formulated as
\begin{eqnarray}
\bm{\tilde{H}}^{\text{LoS}}=\bm{a}_{\text{D}_r}(\theta^{\text{AoA}})\bm{a}_{\text{D}_t}^{\sf H}(\theta^{\text{AoD}}),
\end{eqnarray}
where $\theta^{\text{AoA}}\in [0,2\pi)$ is the angle of arrival (AoA) and $\theta^{\text{AoD}}\in [0,2\pi)$ is the angle of departure (AoD), and
\begin{eqnarray}\label{a_r}
\bm{a}_{\text{D}_r}(\theta^{\text{AoA}})=[1,e^{j\frac{2\pi d_{\text{A}}}{\lambda}\text{sin}\theta^{\text{AoA}}},\ldots,e^{j\frac{2\pi d_\text{A}}{\lambda}(\text{D}_r-1)\text{sin}\theta^{\text{AoA}}}],
\end{eqnarray}
\begin{eqnarray}\label{a_t}
\bm{a}_{\text{D}_t}(\theta^{\text{AoD}})=[1,e^{j\frac{2\pi d_\text{A}}{\lambda}\text{sin}\theta^{\text{AoD}}},\ldots,e^{j\frac{2\pi d_\text{A}}{\lambda}(\text{D}_t-1)\text{sin}\theta^{\text{AoD}}}].
\end{eqnarray}
In (\ref{a_r}) and (\ref{a_t}), $\text{D}_r$ and $\text{D}_t$ denote the number of antennas or elements at the receiver side and transmitter side \cite{pan2020multicell}, respectively.
\begin{table*}[t]
        \centering
        \caption{SUMMARY OF SYSTEM SETUP AND SIMULATION PARAMETERS}\label{tabel112} 
        \vspace{0mm}
        \resizebox{\textwidth}{!}{
                \linespread{1.5}\selectfont
                \begin{tabular}{|c|c|c|c|}
                        \hline  
                        & Direct link, $\hat{\bm{H}}$ & BS-RIS link, $\bm{T}$ & RIS-user link, $\bm{R}$ \\
                        \hline
                        Path loss& $L_{\text{BU}}^k(d_{\text{BU}}^k)=T_0(d_{\text{BU}}^k/d_0)^{-\alpha}$ & $L_{\text{BR}}(d_{\text{BR}})=T_0(d_{\text{BR}}/d_0)^{-\alpha}$ & $L_{\text{RU}}^k(d_{\text{RU}}^k)=T_0(d_{\text{RU}}^k/d_0)^{-\alpha}$ \\
                        \hline  
                        Rician factor & $\kappa_{\text{BU}}$ & $\kappa_{\text{BR}}$ & $\kappa_{\text{RU}}$ \\ \hline  
                        AoA & $\theta_{\text{BU}}^{\text{AoA}}=0$ & $\theta_{\text{BR}}^{\text{AoA}}=\frac{\pi}{2}$ &$\theta_{\text{RU}}^{\text{AoA}}=0$\\  
                        \hline  
                        AoD & $\theta_{\text{BU}}^{\text{AoD}}=0$ & $\theta_{\text{BR}}^{\text{AoD}}=0$ &$\theta_{\text{RU}}^{\text{AoD}}=\frac{\pi}{2}$\\  
                        \hline 
                        LoS component & $\hat{\bm{H}}^{\text{LoS}}=\bm{a}_{\text{R}}(\theta_{\text{BU}}^{\text{AoA}})\bm{a}_{\text{T}}^{\sf H}(\theta_{\text{BU}}^{\text{AoD}})$ & $\bm{T}^{\text{LoS}}=\bm{a}_{\text{R}}(\theta_{\text{BR}}^{\text{AoA}})\bm{a}_{\text{T}}^{\sf H}(\theta_{\text{BR}}^{\text{AoD}})$ &$\bm{R}^{\text{LoS}}=\bm{a}_{\text{R}}(\theta_{\text{RU}}^{\text{AoA}})\bm{a}_{\text{T}}^{\sf H}(\theta_{\text{RU}}^{\text{AoD}})$\\  
                        \hline
        \end{tabular}}
        \vspace{-3mm}
\end{table*}
Let $\kappa_{\text{BU}}=20$, $\kappa_{\text{BR}}=10$ and $\kappa_{\text{RU}}=10$ be the Rician factors of the BS-users links, BS-RIS links, and RIS-users links, respectively. We further denote $d_{\text{BU}}^{k}$, $d_{\text{BR}}$, and $d_{\text{RU}}^{k}$ as the distance between user $U_k$ and the BS, between the BS and the RIS, and between user $U_k$ and the RIS, respectively. The corresponding channel coefficients are given in Table \ref{tabel112}.
All the simulation results are averaged over 100 independent channel realizations.

\subsection{Simulation Results based on Perfect CSI}
We consider the capacity maximization problem with perfect CSI in an RIS-assisted massive MIMO system, which consists of a 128-antenna BS, $K=8$ single-antenna users and an RIS with $N=50$ passive reflecting elements.

\textbf{Effectiveness.} We first study the performance of the proposed algorithms in RIS-assisted systems by showing the average achievable rate under various number of active antennas in Fig. \ref{exp1}. The average achievable rate grows as the number of active antennas increases, which indicates that more active antennas at the BS yields a better performance. In addition, it is clear that all greedy algorithms can achieve a near-optimal solution, and significantly outperform random selection methods. Moreover, greedy selection with phase optimization via proposed alternating optimization framework outperforms the greedy selection with random phase shifts, which demonstrates the necessity of joint antenna selection and passive beamforming. Note that we obtain the average achievable rate of the massive MIMO system without RIS with greedy selection by setting $\{\beta_{n}=0\}_{n=1}^N$. We can see that the RIS-assisted massive MIMO system performs better than the traditional system without RIS by comparing greedy selection with phase optimization and greedy selection without RIS. It further indicates the effectiveness of deploying the RIS in massive MIMO systems.
\begin{figure}[t]
       \centering
      \includegraphics[scale = 0.50]{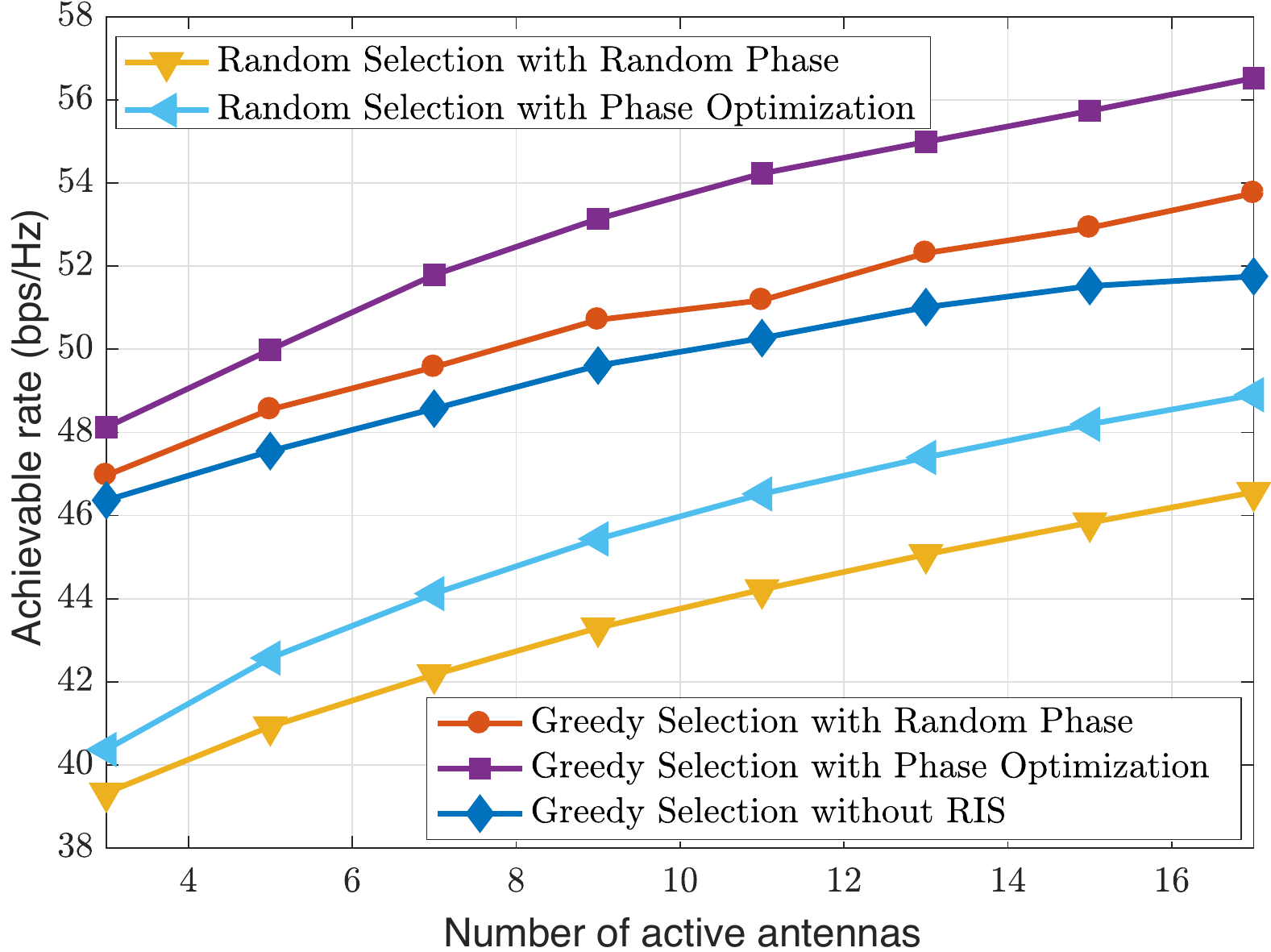}
       \caption{Achievable rate versus the number of active antennas with perfect CSI}
      \label{exp1}
\end{figure}

\textbf{Effect of varying reflecting elements.} We then investigate the impact of the number of reflecting elements on the achievable rate when $N_{S}=10$ in Fig. \ref{exp5}. 
A larger number of reflecting elements yields a higher achievable data rate. 
Moreover, the alternating framework outperforms the greedy selection with random phases under various settings. We then compare the effect of the antenna selection and the RIS deployment on the achievable rate. We further reports results of the massive MIMO system with various $N_{S}$=$10,11,12$ by setting $\{\beta_{n}=0\}_{n=1}^N$, i.e., without RIS. More active antennas are required in the massive MIMO system without RIS to achieve the same or better performance. Specifically, the traditional system with 11, 12 active antennas achieves same achievable rates with the RIS-aided system consisting of 60, 95 reflecting elements, respectively. It thus demonstrates that proposed RIS-aided system can achieve desired achievable rates by using less active antennas.

\begin{figure}[t]
       \centering
       \includegraphics[scale = 0.5]{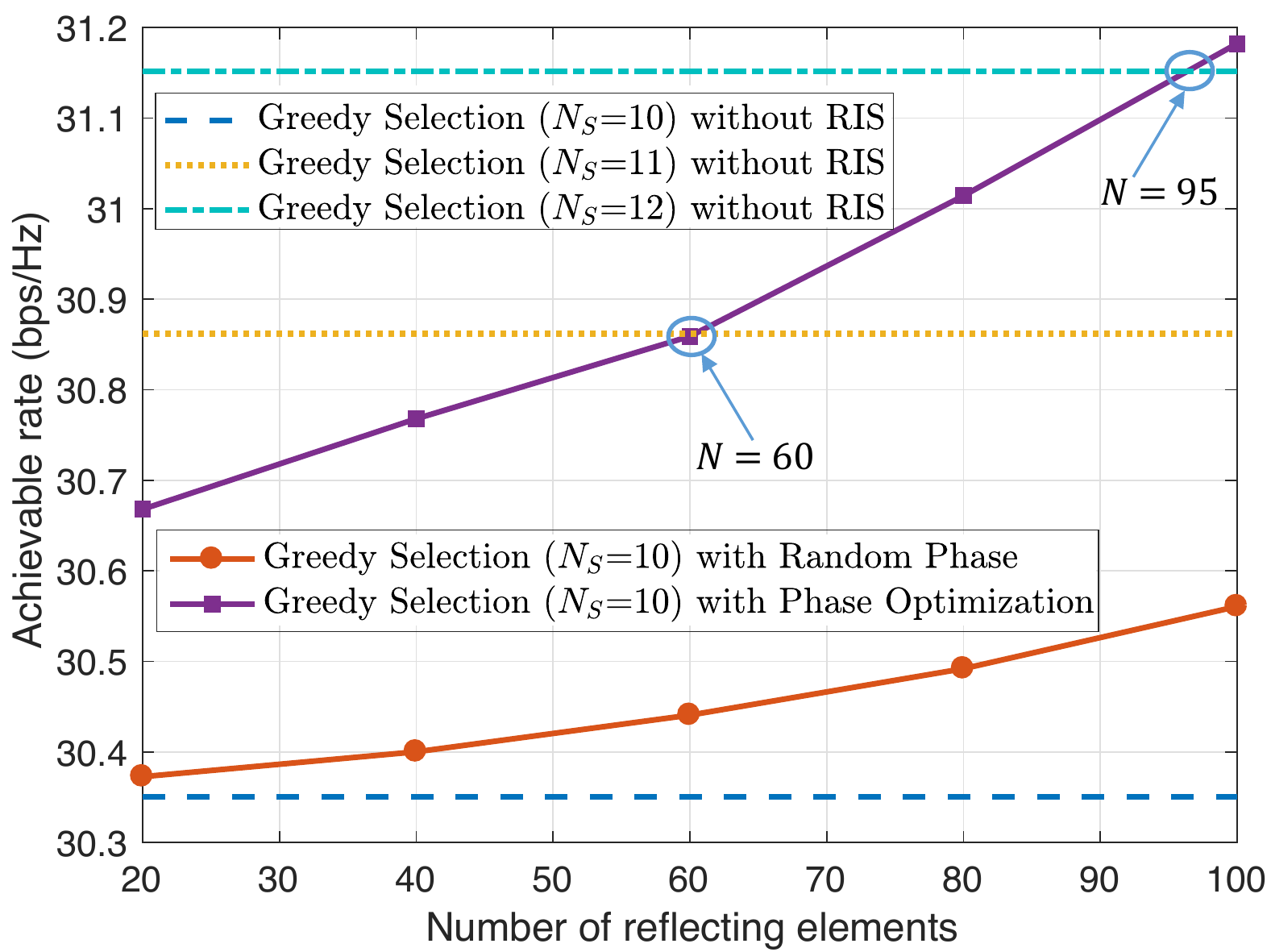}
       \caption{Achievable rate versus the number of reflecting elements with perfect CSI}
       \label{exp5}
\end{figure}

\textbf{Efficiency and scalability.} We compare the running time of greedy selection and exhaustive search for the antenna selection problem in Fig. \ref{exp6}. The exhaustive search method cannot finish in limited time (120 hours) for most settings, which limits its scalability. Moreover, the greedy selection algorithm significantly outperforms the exhaustive search method and achieves at least four orders of magnitude speedups, which demonstrates that the greedy algorithm is able to scale to large-size problems.

\begin{figure}[t]
       \centering
       \includegraphics[scale = 0.50]{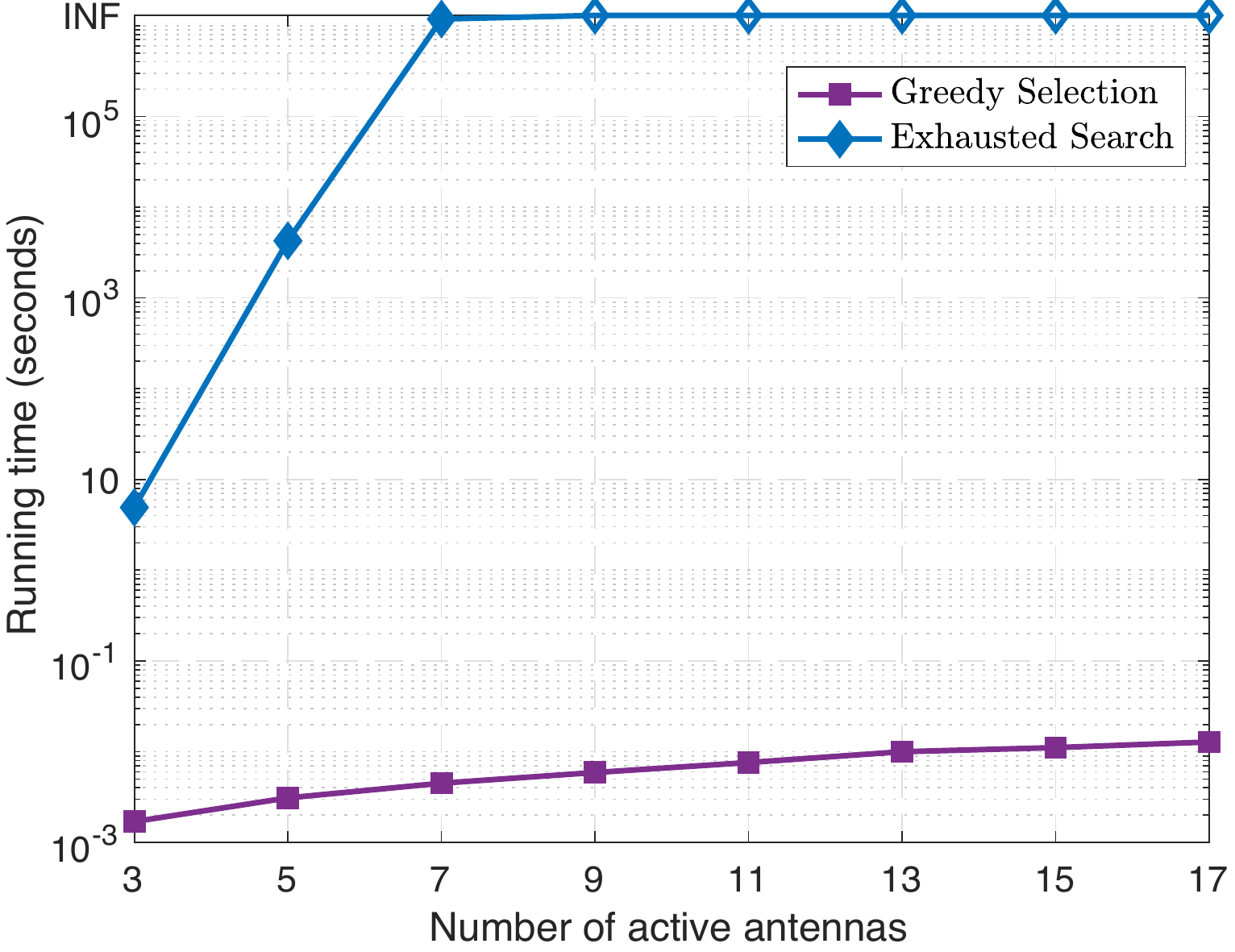}
       \caption{Running time versus the number of active antennas with perfect CSI}
       \label{exp6}
\end{figure}

\subsection{Simulation Results based on Channel Realizations}
We then consider the capacity maximization problem without any prior CSI knowledge in an RIS-assisted massive MIMO communication system with the same settings, i.e., $L=128$, $K=8$, and $N=50$. To implement the proposed algorithms, we randomly collect $s=100$ historical channel realizations by fixing all their positions. 
We plot the simulation results for the following algorithms:
\begin{itemize}
        \item[1)] \textbf{Random Selection:} randomly and independently select antennas under the matroid constraint $|\mathcal{S}|=N_{S}$.
        
        \item[2)] \textbf{Simple Greedy:} execute greedy algorithm over the empirical
        objective function with $\Lambda (N_{S} \log L+2)$ samples, where
        $\Lambda$ is an upper bound of $C_{\bm{H}}(\mathcal{S},\bm{\Theta})$ \cite{2017Stochastic}.
        
        \item[3)] \textbf{Continuous Greedy:} execute over $N_{S}^2$ iterations with
        $N_{S}^4(1 + \log L)$ samples for gradient estimate \cite{2011Maximizing}.
        
        \item[4)] \textbf{Proposed Advanced Stochastic Projected Gradient Method (Advanced SPGM):} execute with the
        step size $1/\sqrt{t}$ and $B$ samples for the speedup gradient estimate, where $t$ is the iteration index of the proposed algorithm.
        
\end{itemize}

\begin{figure}[t]
       \centering
       \includegraphics[scale = 0.50]{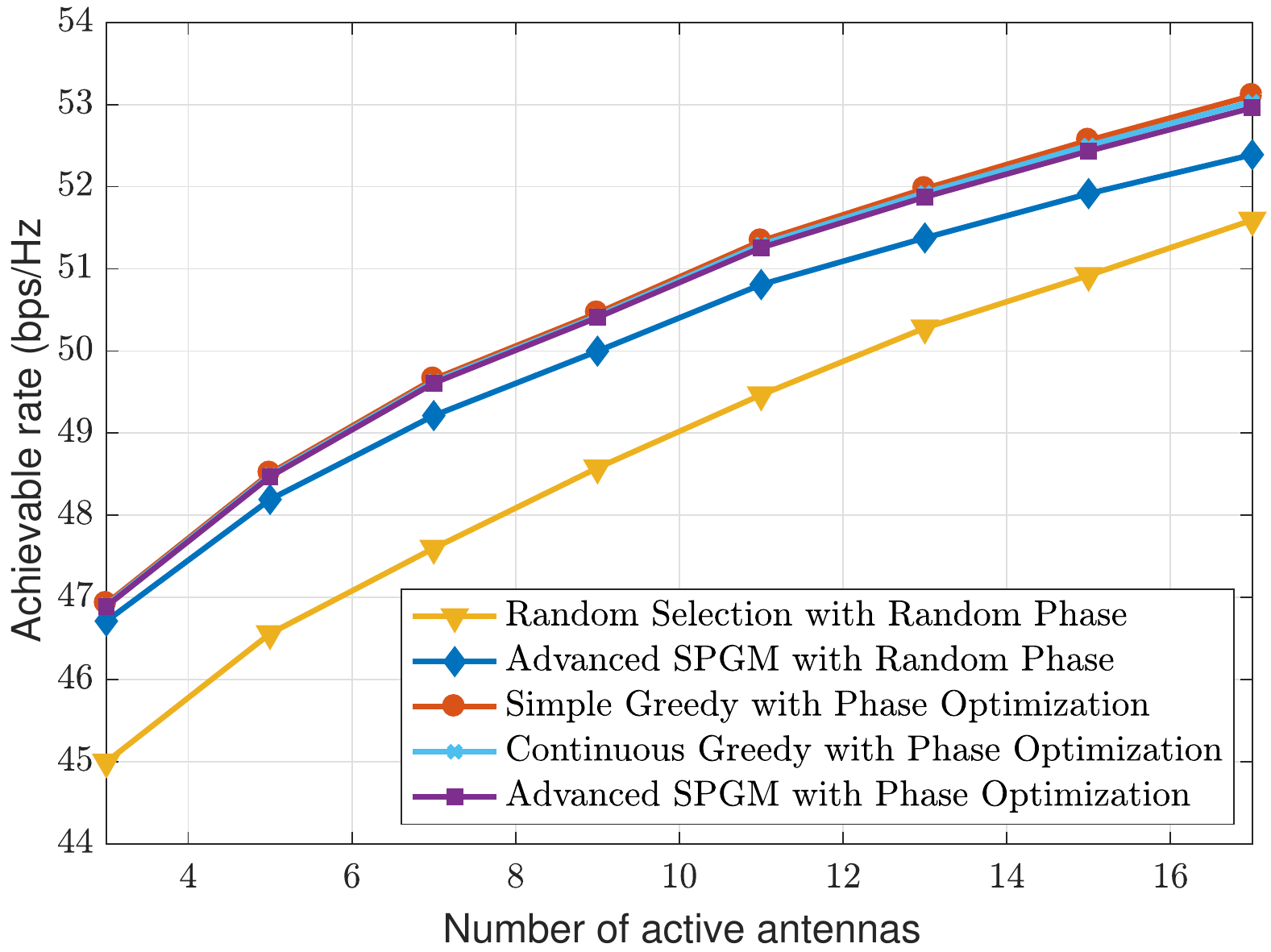}
       \caption{Achievable rate versus the number of active antennas based on channel realizations}
       \label{exp2}
\end{figure}

\begin{figure}[t]
       \centering
       \includegraphics[scale = 0.50]{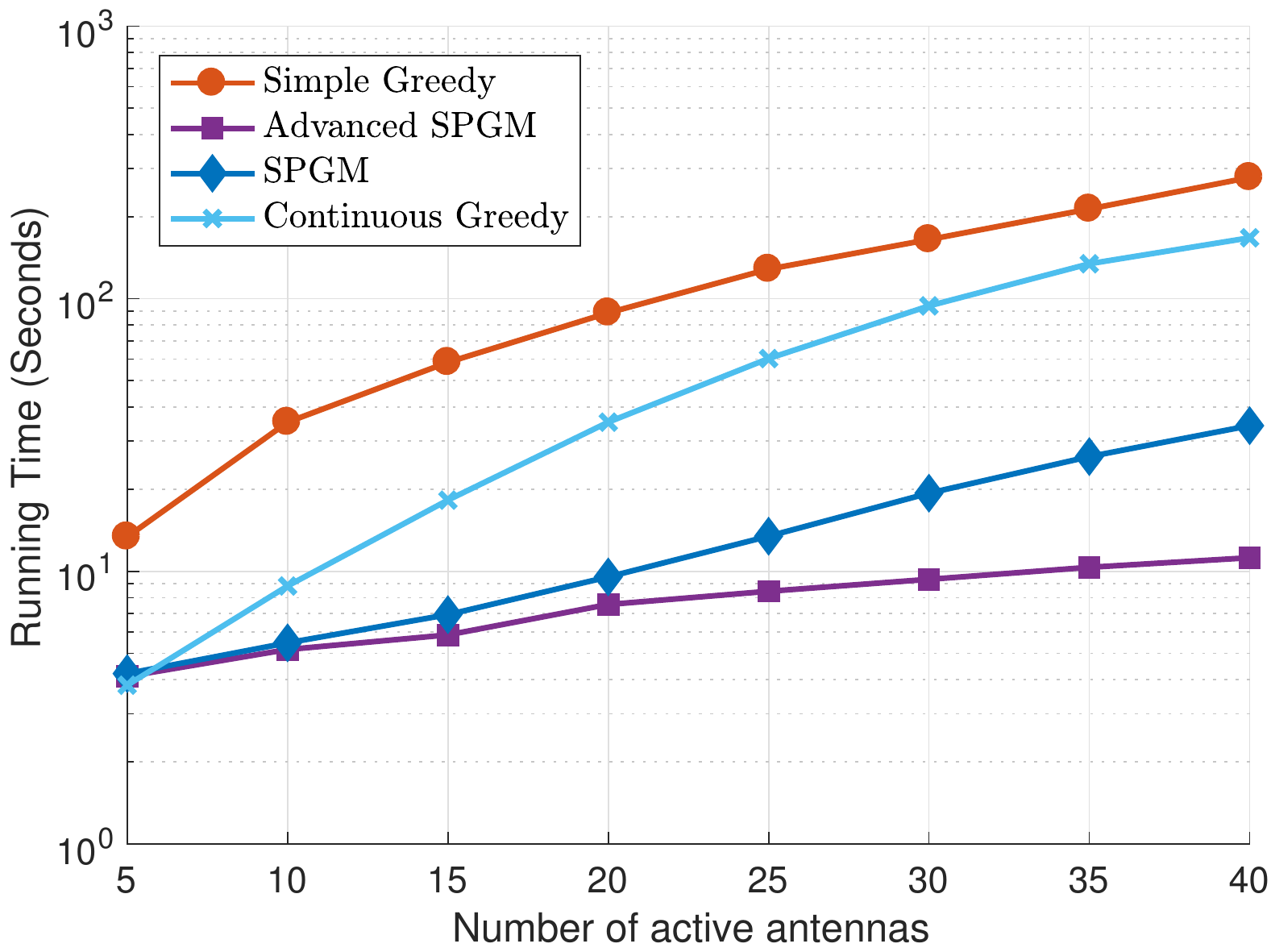}
       \caption{Running time versus the number of active antennas based on channel realizations}
       \label{exp7}
\end{figure}

\textbf{Effectiveness.}
We present the simulation results in Fig. \ref{exp2}, which clearly shows that our proposed SPGM achieves almost the same performance as the simple greedy and continuous greedy methods. Moreover, SPGM with phase optimization via proposed alternating optimization framework outperforms SPGM with random phase shifts, which shows the effectiveness of joint antenna selection and passive beamforming.

\textbf{Efficiency.} We compare different algorithms in terms of the running time in Fig. \ref{exp7}. Basically, the running time grows as the number of active antennas increases. Our proposed SPGM algorithm significantly outperforms simple greedy and continuous greedy, achieving up to 10$\times$ speedups. Moreover, the advanced SPGM algorithm with faster gradient estimating further improves the efficiency and provides up to 100$\times$ speedups.

\section{Conclusions}\label{conclusion}
In this paper, we proposed a cost-effective way and an effective algorithm framework for channel capacity maximization
via joint antenna selection and passive
beamforming in RIS-assisted massive MIMO systems. To solve the challenging system optimization problem, we proposed an alternating optimization
framework to decouple the optimization variables, resulting in antenna selection
and passive beamforming subproblems. With perfect instantaneous
CSI, we developed a greedy algorithm to solve the antenna selection
subproblem by exploiting the submodularity of its objective function. 
We also studied the scenario that any prior knowledge of channel distribution is not available, for which we proposed to solve the problem only based on historically collected
channel realizations. We leveraged the submodularity of the objective function
and reformulated the antenna selection problem as a stochastic submodular
maximization problem, followed by developing an efficient stochastic
gradient method with a faster gradient estimate. The resulting passive beamforming subproblem was solved by iteratively optimizing one variable while keeping other variables fixed, for which computationally efficient iterations were provided by exploiting the unique problem structures under different CSI assumptions. 
The experimental results showed that the proposed algorithms achieve significant performance gains and speedups.

\section{Appendix}
\subsection{Proof of Lemma 1}
Here, we will show that the objective function $C_{\bm{H}}(\mathcal{S})$ (\ref{p3}) is submodular and monotone. 

First, we show the submodularity. If all antennas of the BS are selected, it means that $N_S=L$, we can define the full channel matrix $\bm{\tilde{H}}\in \mathbb{C}^{L\times K}$, thus $\bm{H}(\mathcal{S})\in \mathbb{C}^{N_S \times K}$ is the sub-matrix of $\bm{\tilde{H}}$. Then we define a positive semi-definite matrix $\bm{Q}=\bm{\tilde{H}}^{\rm{H}}\bm{\tilde{H}}$ and let $\bm{Q}({\mathcal{S},\mathcal{S}})=\bm{H}(\mathcal{S})^{\sf H}\bm{H}(\mathcal{S}) \in \mathbb{C}^{N_S\times N_S}$ represent sub-matrix of $\bm{Q}$ with row, column indices in $\mathcal{S}\in\mathcal{L}$. Thereby,   $C_{\bm{H}}(\mathcal{S})$ can be rewritten as 
\begin{eqnarray}
\label{prove1}
C_{\bm{H}}(\mathcal{S})
=\log_2\det\left(\bm{I}_{N}(\mathcal{S},\mathcal{S})+{\sf{snr}}\bm{Q}(\mathcal{S},\mathcal{S})\right).
\end{eqnarray}
We let $\Delta:=\bm{I}_{N}+{\sf{snr}}\bm{Q}$, we can have $C_{\bm{H}}(\mathcal{S})=\log_2\det(\Delta(\mathcal{S},\mathcal{S}))$. Then, let $\bm{u}\sim \mathcal{N}(\mu,\bm{Q})$ denote multivariate Gaussian random vector in $\mathbb{C}^{L}$. We have the differential entropy of $u$ with form $f(\bm{u})=\frac{1}{2}\log\det(\Delta)$  \cite{cover2012elements}. Consider an arbitrary subset of random variables $\bm{u}_{\mathcal{S}}$ indexed by $\mathcal{S}\subseteq\mathcal{L}$, we can obtain
\begin{eqnarray}
f(\bm{u}_{\mathcal{S}})=\frac{1}{2}\log\det(\Delta(\mathcal{S},\mathcal{S}))+c|\mathcal{S}|,
\end{eqnarray}
where $c\geq 0$ is a constant independent of $\mathcal{S}$.

Specially, note that if $f(\bm{u}_{\mathcal{S}})$ is submodular, $C_{\bm{H}}(\mathcal{S})$ is also submodular. Then we will prove $f(\bm{u}_{\mathcal{S}})$ is submodular. Let us make an assumption that $\bm{u}_{\mathcal{M}}, \bm{u}_{\mathcal{N}}$ are two arbitrary subsets of random variables with $\mathcal{M},\mathcal{N}\subseteq\mathcal{S}$. Let $I(\bm{u}_{\mathcal{M}\setminus\mathcal{N}};\bm{u}_{\mathcal{N}\setminus\mathcal{M}}|\bm{u}_{\mathcal{M}\cap\mathcal{N}})$ denote the conditional mutual information, which is non-negative. It can be expressed as
 \begin{eqnarray}
I(\bm{u}_{\mathcal{M}\setminus\mathcal{N}};\bm{u}_{\mathcal{N}\setminus\mathcal{M}}|\bm{u}_{\mathcal{M}\cap\mathcal{N}})=
f(\bm{u}_{\mathcal{M}\setminus\mathcal{N}}|\bm{u}_{\mathcal{M}\cap\mathcal{N}})+\nonumber\\
f(\bm{u}_{\mathcal{N}\setminus\mathcal{M}}|\bm{u}_{\mathcal{M}\cap\mathcal{N}})-
f(\bm{u}_{\mathcal{M}\setminus\mathcal{N}},\bm{u}_{\mathcal{N}\setminus\mathcal{M}}|\bm{u}_{\mathcal{M}\cap\mathcal{N}})\nonumber\\
=f(\bm{u}_{\mathcal{M}})+f(\bm{u}_{\mathcal{N}})-f(\bm{u}_{\mathcal{M}\cup\mathcal{N}})-f(\bm{u}_{\mathcal{M}\cap\mathcal{N}})\geq 0.
\end{eqnarray}
Thus we can prove that $f(\bm{u}_{\mathcal{S}})$ is submodular according to the last inequality. Hence,  $C_{\bm{H}}(\mathcal{S})$ is also submodular. 

Second, we need show the monotonicity that $C_{\bm{H}}(\mathcal{S}\cup \{v\})\geq C_{\bm{H}}(\mathcal{S})$ for all $\mathcal{S}\subseteq\mathcal{L}$ and $v\notin \mathcal{S}$. We have
\begin{eqnarray}
C_{\bm{H}}(\mathcal{S}\cup \{v\})- C_{\bm{H}}(\mathcal{S})=\log\left(\frac{\det\left(\Delta(\mathcal{S}\cup \{v\},\mathcal{S}\cup \{v\})\right)}{\det(\Delta(\mathcal{S},\mathcal{S}))}\right).\nonumber
\end{eqnarray}
Suppose that $\Delta(\mathcal{A}\cup \{v\},\mathcal{A}\cup \{v\})\bm{x}=\bm{e}_v$, by Cramer’s rule, we can easily obtain
\begin{eqnarray}
\begin{aligned}
-\log(x_v)&=\log\left(\frac{\det\left(\Delta(\mathcal{S}\cup \{v\},\mathcal{S}\cup \{v\})\right)}{\det(\Delta(\mathcal{S},\mathcal{S}))}\right)\nonumber\\
&=-\log\left(\left(\Delta\left(\mathcal{S}\cup \{v\},\mathcal{S}\cup \{v\}\right)\right)^{-1}\bm{e}_v\right)_{v}\\
&\geq\log\lambda_{\min}(\Delta(\mathcal{S}\cup \{v\},\mathcal{S}\cup \{v\}))\\
&\geq\log\lambda_{\min}(\Delta)\\
&\geq 0.
\end{aligned}
\end{eqnarray}
Since the smallest eigenvalue of a principal submatrix is at least the smallest eigenvalue of the bigger matrix and $\Delta:=\bm{I}_{N}+{\sf{snr}}\bm{Q}\succeq \bm{I}_N$,  thus the above inequality holds. Hence, $C_{\bm{H}}(\mathcal{S})$ is monotone.

Based on the above, the objective function $C_{\bm{H}}(\mathcal{S})$ (\ref{p3}) is submodular and monotone, this proof is finished.

        \bibliographystyle{ieeetr}
        \bibliography{Reference}

\begin{thebibliography}{10}

\bibitem{2019YuStochastic}
K.~Yu, J.~He, and Y.~Shi, ``Stochastic submodular maximization for scalable
  network adaptation in dense {C}loud-{RAN},'' in {\em Proc. IEEE Int. Conf.
  Commun. (ICC)}, Shanghai, China, May. 2019.

\bibitem{letaief2019roadmap}
K.~B. Letaief, W.~Chen, Y.~Shi, J.~Zhang, and Y.-J.~A. Zhang, ``The roadmap to
  6{G}: {AI} empowered wireless networks,'' {\em IEEE Commun. Mag.}, vol.~57,
  no.~8, pp.~84--90, Aug. 2019.

\bibitem{larsson2014massive}
E.~G. Larsson, O.~Edfors, F.~, and T.~L. Marzetta, ``Massive {MIMO} for next
  generation wireless systems,'' {\em IEEE Commun. Mag.}, vol.~52, no.~2,
  pp.~186--195, Feb. 2014.

\bibitem{rusek2012scaling}
F.~Rusek, D.~Persson, B.~K. Lau, E.~G. Larsson, T.~L. Marzetta, O.~Edfors, and
  F.~Tufvesson, ``Scaling up {MIMO}: Opportunities and challenges with very
  large arrays,'' {\em IEEE Signal Process. Mag.}, vol.~30, no.~1, pp.~40--60,
  Dec. 2012.

\bibitem{liu2018massive}
L.~Liu and W.~Yu, ``Massive connectivity with massive {MIMO}—part {I}: Device
  activity detection and channel estimation,'' {\em IEEE Trans. Signal
  Process.}, vol.~66, no.~11, pp.~2933--2946, Jun. 2018.

\bibitem{mendoncca2019antenna}
M.~O. Mendon{\c{c}}a, P.~S. Diniz, T.~N. Ferreira, and L.~Lovisolo, ``Antenna
  selection in massive {MIMO} based on greedy algorithms,'' {\em IEEE Trans.
  Wireless Commun.}, vol.~19, no.~3, pp.~1868--1881, Mar. 2020.

\bibitem{2018aSimple}
A.~Konar and N.~D. Sidiropoulos, ``A simple and effective approach for transmit
  antenna selection in multiuser massive {MIMO} leveraging submodularity,''
  {\em IEEE Trans. Signal Process.}, vol.~66, no.~18, pp.~4869--4883, Aug.
  2018.

\bibitem{chen2019intelligent}
J.~Chen, S.~Chen, Y.~Qi, and S.~Fu, ``Intelligent massive {MIMO} antenna
  selection using {M}onte {C}arlo tree search,'' {\em IEEE Trans. Signal
  Process.}, vol.~67, no.~20, pp.~5380--5390, Oct. 2019.

\bibitem{gkizeli2014maximum}
M.~Gkizeli and G.~N. Karystinos, ``Maximum-{SNR} antenna selection among a
  large number of transmit antennas,'' {\em IEEE J. Sel. Topics Signal
  Process.}, vol.~8, no.~5, pp.~891--901, Oct. 2014.

\bibitem{li2014energy}
H.~Li, L.~Song, and M.~Debbah, ``Energy efficiency of large-scale multiple
  antenna systems with transmit antenna selection,'' {\em IEEE Trans. Commun.},
  vol.~62, no.~2, pp.~638--647, Jan. 2014.

\bibitem{gao2017massive}
Y.~Gao, H.~Vinck, and T.~Kaiser, ``Massive {MIMO} antenna selection: Switching
  architectures, capacity bounds, and optimal antenna selection algorithms,''
  {\em IEEE Trans. Signal Process.}, vol.~66, no.~5, pp.~1346--1360, Dec. 2017.

\bibitem{wu2019towards}
Q.~Wu and R.~Zhang, ``Towards smart and reconfigurable environment: Intelligent
  reflecting surface aided wireless network,'' {\em IEEE Commun. Mag.},
  vol.~58, no.~1, pp.~106--112, Nov. 2019.

\bibitem{2020WangZ}
Z.~Wang, Y.~Shi, Y.~Zhou, H.~Zhou, and N.~Zhang, ``Wireless-powered
  over-the-air computation in intelligent reflecting surface aided {IoT}
  networks,'' {\em IEEE Internet Things J.}, Aug. 2020.
\newblock {Early Access}.

\bibitem{Intelligent2019reflecting}
Q.~Wu and R.~Zhang, ``Intelligent reflecting surface enhanced wireless network
  via joint active and passive beamforming,'' {\em IEEE Trans. Wireless
  Commun.}, vol.~18, no.~11, pp.~5394--5409, Aug. 2019.

\bibitem{yu2020robust}
X.~Yu, D.~Xu, Y.~Sun, D.~W.~K. Ng, and R.~Schober, ``Robust and secure wireless
  communications via intelligent reflecting surfaces,'' {\em IEEE J. Sel. Areas
  Commun.}, Jul. 2020.
\newblock Early Access.

\bibitem{cui2014coding}
T.~J. Cui, M.~Q. Qi, X.~Wan, J.~Zhao, and Q.~Cheng, ``Coding metamaterials,
  digital metamaterials and programmable metamaterials,'' {\em Light: Science
  \& Applications}, vol.~3, no.~10, p.~e218, Oct. 2014.

\bibitem{yuan2020reconfigurable}
X.~Yuan, Y.-J. Zhang, Y.~Shi, W.~Yan, and H.~Liu,
  ``Reconfigurable-intelligent-surface empowered 6{G} wireless communications:
  Challenges and opportunities,'' {\em arXiv preprint arXiv:2001.00364}, 2020.

\bibitem{2019Reconfigurable}
C.~Huang, A.~Zappone, G.~C. Alexandropoulos, M.~Debbah, and C.~Yuen,
  ``Reconfigurable intelligent surfaces for energy efficiency in wireless
  communication,'' {\em IEEE Trans. Wireless Commun.}, vol.~18, no.~8,
  pp.~4157--4170, Aug. 2019.

\bibitem{2019zhangCapacity}
S.~Zhang and R.~Zhang, ``Capacity characterization for intelligent reflecting
  surface aided {MIMO} communication,'' {\em IEEE J. Sel. Areas Commun.},
  vol.~38, no.~8, pp.~1823--1838, Aug. 2020.

\bibitem{Multi-cell}
D.~Gesbert, S.~Hanly, H.~Huang, S.~S. Shitz, O.~Simeone, and W.~Yu,
  ``Multi-cell {MIMO} cooperative networks: A new look at interference,'' {\em
  IEEE J. Sel. Areas Commun.}, vol.~28, no.~9, pp.~1380--1408, Dec. 2010.

\bibitem{shi2015robust}
Y.~Shi, J.~Zhang, and K.~B. Letaief, ``Robust group sparse beamforming for
  multicast green cloud-{RAN} with imperfect {CSI},'' {\em IEEE Trans. Signal
  Process}, vol.~63, no.~17, pp.~4647--4659, Sept. 2015.

\bibitem{shi2014optimal}
Y.~Shi, J.~Zhang, and K.~B. Letaief, ``Optimal stochastic coordinated
  beamforming for wireless cooperative networks with {CSI} uncertainty,'' {\em
  IEEE Trans. Signal Process}, vol.~63, no.~4, pp.~960--973, Dec. 2014.

\bibitem{love2008overview}
D.~J. Love, R.~W. Heath, V.~K. Lau, D.~Gesbert, B.~D. Rao, and M.~Andrews, ``An
  overview of limited feedback in wireless communication systems,'' {\em IEEE
  J. Sel. Areas Commun}, vol.~26, no.~8, pp.~1341--1365, Oct. 2008.

\bibitem{jindal2010unified}
N.~Jindal and A.~Lozano, ``A unified treatment of optimum pilot overhead in
  multipath fading channels,'' {\em IEEE Trans. Commun.}, vol.~58,
  pp.~2939--2948, Oct. 2010.

\bibitem{maddah2012completely}
M.~A. Maddah-Ali and D.~Tse, ``Completely stale transmitter channel state
  information is still very useful,'' {\em IEEE Trans. Inf. Theory}, vol.~58,
  no.~7, pp.~4418--4431, Jul. 2012.

\bibitem{2015Massive}
X.~Gao, O.~Edfors, F.~Tufvesson, and E.~G. Larsson, ``Massive {MIMO} in real
  propagation environments: Do all antennas contribute equally?,'' {\em IEEE
  Trans. Commun.}, vol.~63, no.~11, pp.~3917--3928, Nov. 2015.

\bibitem{1995Anexact}
C.~W. Ko, J.~Lee, and M.~Queyranne, ``An exact algorithm for maximum entropy
  sampling,'' {\em Oper. Res.}, vol.~43, no.~4, pp.~684--691, Aug. 1995.

\bibitem{2015Multi-switch}
X.~Gao, O.~Edfors, F.~Tufvesson, and E.~G. Larsson, ``Multi-switch for antenna
  selection in massive {MIMO},'' in {\em Proc. IEEE Global Telecommun.},
  pp.~1--6, San Diego, CA, USA, Dec. 2015.

\bibitem{2017Reduced}
A.~G. Rodriguez, C.~Masouros, and P.~Rulikowski, ``Reduced switching
  connectivity for large scale antenna selection,'' {\em IEEE Trans. Commun.},
  vol.~65, no.~5, pp.~2250--2263, May. 2017.

\bibitem{2005Submodular}
S.~Fujishige, {\em Submodular functions and optimization}, vol.~58.
\newblock Amsterdam, The Netherlands: Elsevier, 2005.

\bibitem{tohidi2020submodularity}
E.~Tohidi, R.~Amiri, M.~Coutino, D.~Gesbert, G.~Leus, and A.~Karbasi,
  ``Submodularity in action: From machine learning to signal processing
  applications,'' {\em IEEE Signal Process. Mag.}, vol.~37, no.~5,
  pp.~120--133, Sept. 2020.

\bibitem{2010Greedysensor}
M.~Shamaiah, S.Banerjee, and H.Vikalo, ``sensor selection: Leveraging
  submodularity,'' in {\em Proc. IEEE Conf. Decis. Control (CDC).},
  pp.~2572--2577, Atlanta, GA, USA, Dec. 2010.

\bibitem{1978AnAnalysis}
G.~L. Nemhauser, L.~A. Wolsey, and M.~L. Fisher, ``An analysis of
  approximations for maximizing submodular set functions-{I},'' {\em Math.
  Program.}, vol.~14, no.~1, pp.~265--294, Dec. 1978.

\bibitem{1978Best}
G.~L. Nemhauser and L.~A. Wolsey, ``Best algorithms for approximating the
  maximum of a submodular set function,'' {\em Math. Oper. Res.}, vol.~3,
  no.~3, pp.~177--188, Aug. 1978.

\bibitem{liaskos2018new}
C.~Liaskos, S.~Nie, A.~Tsioliaridou, A.~Pitsillides, S.~Ioannidis, and
  I.~Akyildiz, ``A new wireless communication paradigm through
  software-controlled metasurfaces,'' {\em IEEE Commun. Mag.}, vol.~56, no.~9,
  pp.~162--169, Sept. 2018.

\bibitem{liang2019large}
Y.-C. Liang, R.~Long, Q.~Zhang, J.~Chen, H.~V. Cheng, and H.~Guo, ``Large
  intelligent surface/antennas ({LISA}): Making reflective radios smart,'' {\em
  J. Commun. Inf. Netw.}, vol.~4, no.~2, pp.~40--50, Jun. 2019.

\bibitem{huang2019holographic}
C.~Huang, S.~Hu, G.~C. Alexandropoulos, A.~Zappone, C.~Yuen, R.~Zhang,
  M.~Di~Renzo, and M.~Debbah, ``Holographic {MIMO} surfaces for 6{G} wireless
  networks: Opportunities, challenges, and trends,'' {\em IEEE Trans. Wireless
  Commun.}
\newblock Early Access.

\bibitem{2018Enhanced}
Y.~Shi, J.~Zhang, W.~Chen, and K.~B. Letaief, ``Enhanced group sparse
  beamforming for green cloud-{RAN}: A random matrix approach,'' {\em IEEE
  Trans. Wireless Commun.}, vol.~17, no.~4, pp.~2511--2524, Apr. 2018.

\bibitem{2011random}
R.~Couillet and M.~Debbah, {\em Random matrix methods for wireless
  communications}.
\newblock Cambridge Univ. Press, 2011.

\bibitem{1982Maximizing}
L.~Wolsey, ``Maximizing real-valued submodular functions: Primal and dual
  heuristics for location problems,'' {\em Math. Oper. Res.}, vol.~7,
  pp.~410--425, Aug. 1982.

\bibitem{2017GradientMethods}
S.~H. Hassani, M.~Soltanolkotabi, and A.~Karbasi, ``Gradient methods for
  submodular maximization,'' {\em Neural Inf. Process. Syst. (NIPS)},
  pp.~5843--5853, Dec. 2017.

\bibitem{2015AGeneralization}
T.~Soma and Y.~Yoshida, ``A generalization of submodular cover via the
  diminishing return property on the integer lattice,'' {\em Neural Inf.
  Process. Syst. (NIPS)}, pp.~847--855, Dec. 2015.

\bibitem{oxley2006matroid}
J.~G. Oxley, {\em Matroid theory}, vol.~3.
\newblock Oxford University Press, USA, 2006.

\bibitem{2011Maximizing}
G.~Calinescu, C.~Chekuri, M.~Pál, and J.~Vondrák, ``Maximizing a monotone
  submodular function subject to a matroid constraint,'' {\em SIAM J. Comput.},
  vol.~40, no.~6, pp.~1740--1766, Dec. 2011.

\bibitem{2003Combinatorial}
J.~Edmonds, ``Submodular functions, matroids, and certain polyhedra,'' in {\em
  Proc. Calgary Int. Conf. Combinatorial Structures and Their Applications},
  pp.~69--87, Calgary, Alta, Jun. 1969.

\bibitem{1986matroids}
N.~White, G.-C. Rota, and N.~M. White, {\em Theory of matroids}.
\newblock Cambridge Univ. Press, 1986.

\bibitem{2008OptimalSTOC}
J.~Vondrak, ``Optimal approximation for the submodular welfare problem in the
  value oracle model,'' in {\em Proc. ACM Symposium on Theory of Computing
  (STOC)}, pp.~67--74, Victoria, BC, Canada, May. 2008.

\bibitem{2004Pipage}
A.~Ageev and M.~Sviridenko, ``Pipage rounding: A new method of constructing
  algorithms with proven performance guarantee,'' {\em J. Combin. Optim.},
  vol.~8, pp.~307--328, Sept. 2004.

\bibitem{2017Stochastic}
M.~R. Karimi, M.~L., S.~H. Hassani, and A.~Krause, ``Stochastic submodular
  maximization: The case of coverage functions,'' {\em Neural Inf. Process.
  Syst. (NIPS)}, pp.~6856--6866, Dec. 2017.

\bibitem{1990MatrixAnalysis}
J.~Horn and C.~R. Johnson, {\em Matrix Analysis}.
\newblock Cambridge Univ. Press, 1990.

\bibitem{2014CVX}
M.~Grant and S.~Boyd, {\em CVX: Matlab software for disciplined convex
  programming, version 2.1}.
\newblock Mar. 2014.

\bibitem{huang2018energy}
C.~Huang, G.~C. Alexandropoulos, A.~Zappone, M.~Debbah, and C.~Yuen, ``Energy
  efficient multi-user {MISO} communication using low resolution large
  intelligent surfaces,'' in {\em Proc. IEEE Global Commun. Conf. (Globecom)
  Wkshps}, Waikoloa, Hawaii, USA, Dec. 2018.

\bibitem{pan2020multicell}
C.~Pan, H.~Ren, K.~Wang, W.~Xu, M.~Elkashlan, A.~Nallanathan, and L.~Hanzo,
  ``Multicell {MIMO} communications relying on intelligent reflecting
  surfaces,'' {\em IEEE Trans. Wireless Commun.}, May. 2020.

\bibitem{cover2012elements}
T.~M. Cover and J.~A. Thomas, {\em Elements of Information Theory}.
\newblock New York: Wiley, 1991.

\end{thebibliography}
        
\end{document}